\newtheorem{theorem}{Theorem}
\newtheorem{corollary}{Corollary}
\newtheorem{lemma}{Lemma}
\newtheorem{proposition}{Proposition}
\newtheorem{definition}{Definition}
\newcommand{\ignore}[1]{}
\newcommand{\magicwand}{\itemsep=1pt\parskip=1pt\topskip=0pt}
\newcommand{\F}{\mathbb{F}}
\newcommand{\N}{\mathbb{N}}
\newcommand{\bin}{\{0, 1\}}
\newcommand{\Supp}{\mathrm{Supp}}
\newcommand{\Enc}{\textsf{Enc}} 
\newcommand{\Dec}{\textsf{Dec}} 
\newcommand{\Com}{\textsf{Com}} 
\newcommand{\Decom}{\textsf{Decom}} 
\newcommand{\Rec}{\textsf{Rec}}
\newcommand{\calO}{\mathcal{O}}
\newcommand{\calF}{\mathcal{F}}
\newcommand{\calH}{\mathcal{H}}
\newcommand{\invert}{\mathsf{invert}}
\newcommand{\forge}{\mathsf{forge}}
\newcommand{\correctf}{\mathsf{correctf}}
\newcommand{\invertible}{\mathsf{invertible}}
\newcommand{\forgeable}{\mathsf{forgeable}}
\newcommand{\Exp}{\mathbb{E}}
\newcommand{\SD}{\mathsf{SD}}
\begin{document}


\title{Error Correction by Structural Simplicity: \\Correcting  Samplable Additive Errors}

\author{Kenji Yasunaga\\
Kanazawa University\\
yasunaga@se.kanazawa-u.ac.jp}
\date{\today}





\maketitle

\begin{abstract}
  This paper explores the possibilities and limitations of error correction by the structural simplicity of error mechanisms.
  Specifically, we consider channel models, called \emph{samplable additive channels}, in which
   (a) errors are efficiently sampled without the knowledge of the coding scheme or the transmitted codeword;
  (b) the entropy of the error distribution is bounded; and
   (c) the number of errors introduced by the channel is unbounded.
  For the channels, several negative and positive results are provided.
   Assuming the existence of one-way functions, there are samplable additive errors of entropy $n^{\epsilon}$ for $\epsilon \in (0,1)$
  that are pseudorandom, and thus not correctable by efficient coding schemes.
  It is shown that there is an oracle algorithm that induces a samplable distribution over $\bin^n$ of entropy $m = \omega( \log n)$
  that is not pseudorandom, but is uncorrectable by efficient schemes of rate less than $1 - m/n - o(1)$.
  The results indicate that restricting error mechanisms to be efficiently samplable and not pseudorandom is insufficient for error correction.
    As positive results, some conditions are provided under which efficient error correction is possible.
\end{abstract}


\section{Introduction}

In the theory of error-correcting codes,
two of the most-studied channel models are probabilistic channels and worst-case channels.
In probabilistic channels, errors are introduced through stochastic processes,
and the most well-known one is the binary symmetric channel (BSC).
In worst-case (or adversarial) channels, errors are introduced adversarially 
by considering the choice of \textcolor{black}{coding schemes}
and transmitted \textcolor{black}{strings, called \emph{codewords}},
under the restriction of the \emph{error rate}, \textcolor{black}{which is
  the ratio of the number of errors to the length of the codeword.}
In his seminal work~\cite{Sha48}, Shannon showed that reliable communication can be achieved over BSC
if the \emph{coding rate}\textcolor{black}{, which represents the efficiency of transmission,}
is less than $1-h_2(p)$, where
  $h_2(\cdot)$ is the binary entropy function,
and $p$ is the crossover probability of BSC.
In contrast, it is known that reliable communication cannot be achieved over worst-case channels
when the error rate is at least $1/4$ unless the coding rate tends to zero~\cite{Plo60}.

If we view the introduction of errors as \emph{computation} of the channel,
probabilistic channels perform low-cost computation with little knowledge about the \textcolor{black}{coding scheme} and the \textcolor{black}{transmitted codeword}, 
while worst-case channels perform high-cost computation with the full-knowledge.
As intermediate channels between probabilistic channels and worst-case channels,
Lipton~\cite{Lip94} introduced \emph{computationally-bounded channels},
where errors are introduced by polynomial-time computation. 
\textcolor{black}{It has been demonstrated that reliable communication for these intermediate channels
  can be achieved by more efficient schemes than those for worst-case ones~\cite{Lip94,MPSW10,GS16}.}
\ignore{
He showed that reliable communication can be achieved
at the coding rate less than $1-h_2(p)$ in the shared randomness setting, where $p < 1$ is the error rate,
which is the fraction of errors introduced by the channel.
Micali et al.~\cite{MPSW10} presented reliable coding schemes in the public-key infrastructure setting. 
Guruswami and Smith~\cite{GS16} showed reliable coding schemes without assuming the shared randomness or
the public-key infrastructure.
Note that these work~\cite{Lip94,MPSW10,GS16} consider the settings in which channels are computationally-bounded
and the error rate is bounded.
}

Although there have been many studies on error correction in various channel models,
the most basic principle for error correction has been the same:
the number of errors occurred in communications is restricted.
Namely, an upper bound on the error rate is a priori provided, and coding schemes are designed with the knowledge of the bound.
However, since qualitatively better error correction is possible for computationally-bounded channels,
it may be possible to correct errors without resorting to the bound on the error rate.
Thus, we ask the following question:
\begin{quote}
\emph{Is it possible to correct errors based on the structural simplicity of error mechanisms?}
\end{quote}

In this work, we partially answer the above question.
Specifically, we introduce a novel channel model, called \emph{samplable additive channel},
and investigate the error-correction capabilities for the channel.
In samplable additive channels,
  (a) errors are efficiently sampled without the knowledge of the coding scheme or the transmitted codeword;
  (b) the entropy of the error distribution is bounded; and
  (c) the error rate is unbounded.
Condition (a) means that there is a polynomial-time algorithm that generates samples according to the error distribution,
where the algorithm is designed without the knowledge of the coding scheme, and does not receive the transmitted codeword as input.
The restriction of efficient samplability is also employed in the previous studies of computationally-bounded channels.
For a structural simplicity, condition (b) is necessary since
high-entropy distributions can generate unpredictable and complex errors.
Condition (c) is employed for removing the effect of bounding the error rate for error correction.
In addition, for the ease of analysis, we mainly consider \emph{flat} distributions, which are the uniform distributions over the supports.


Samplable additive channels are relatively simple channel models since the error distributions are identical
for every coding scheme and transmitted codeword.
The binary symmetric channel is an example.
The setting is incomparable to previous notions of error correction against computationally-bounded channels.
Our model is stronger because we do not restrict the error rate, 
but is weaker because the channel cannot see the code or the transmitted codeword.

\subsection{The Setting}

Before presenting our results,
we briefly describe our problem setting.

The coding scheme, also referred to as code, consists of two functions $\Enc : \bin^{Rn} \to \bin^n$ and $\Dec : \bin^n \to \bin^{Rn}$, where $R \in [0,1]$
is called \emph{coding rate}, or simply \emph{rate}.
A message $x \in \bin^{Rn}$ is encoded to $\Enc(x)$, 
and transmitted to the channel.
The channel introduces an error $z \in \bin^n$, and the decoder receives the string $\Enc(x) + z$,
where $+$ is the bit-wise addition modulo $2$, and outputs $\tilde{x} = \Dec(\Enc(x)+z)$.
Decoding is done successfully if $\tilde{x} = x$.
It is desirable to construct a high-rate code  that can successfully correct errors.
The \emph{error rate} is $w_z/n \in [0,1]$, where $w_z$ is the number of non-zero elements in $z$.

For every samplable additive channel, an error distribution $Z$ over $\bin^n$ is associated.
If $Z$ is flat, each element in the support of $Z$ is sampled with probability $1/|Z|$, and thus the entropy of $Z$ is $\log |Z|$,
which takes values in $[0,n]$. 
In our setting, we assume that sampling from $Z$ can be simulated by a polynomial-time algorithm,
the entropy of $Z$ is bounded, and the error rate is not bounded.
In addition, the coding scheme can be chosen based on the knowledge of $Z$,
and an error vector $z$ is sampled from $Z$ without knowing $\Enc(x)$.
We would like to know which $Z$ is (un)correctable, especially by efficient coding schemes.

\subsection{Main Results}

We investigate the error correction capabilities of samplable additive errors.

\paragraph{Errors from flat distributions.}
By simple combinatorial arguments, it can be shown that any flat $Z$ can be corrected by
a code of rate $R \leq 1 - m/n - o(1)$, but cannot for rate $R > 1 - m/n -o(1)$,
where $m$ is the entropy of $Z$. Thus, the rate $1-m/n$ is essentially optimal.
In addition, if $m = O(\log n)$ and $R \leq 1 -m/n - o(1)$, both the encoding and the decoding can be done in polynomial time.

\paragraph{Pseudorandom errors.}
Assuming the existence of one-way functions, there exists pseudorandom generators~\cite{HILL99},
which generate distributions that look random to every efficient algorithm.
Hence, no efficient scheme can correct errors from such distributions.
It follows from this fact that, assuming the existence of one-way functions,
there exists an error distribution $Z$ with entropy $n^\epsilon$ for any constant $\epsilon \in (0,1)$
  that are not efficiently correctable. 

  \paragraph{Errors with membership test.}
To avoid the impossibility of correcting pseudorandom errors,
we consider samplable distributions for which membership test can be done efficiently.
Such distributions are not pseudorandom since the membership test can be used to distinguish them from the uniform distribution.

We show the existence of an uncorrectable distribution with membership test for \emph{low-rate} codes.
As sampling algorithms, we employ \emph{oracle algorithms}~\cite{AB09},
  which can make a black-box use of an external entity called \emph{oracle}.
  We present an oracle algorithm that induces
a samplable distribution $Z$ of entropy $\textcolor{black}{m = }$ $\omega(\log n)$
that is not correctable by efficient coding schemes of rate $R < 1 - \textcolor{black}{m}/n - \textcolor{black}{o(1)}$.
The result complements the impossibility of correcting flat distributions for rate $R > 1 - \textcolor{black}{m}/n - o(1)$.
Also, the entropy of $\omega(\log n)$ is optimal since
any flat $Z$ with entropy $ O(\log n)$ can be corrected by a polynomial-time coding scheme.

\paragraph{Positive results.}
As a positive result, we show that if the set of error vectors forms a linear subspace,
then every error in the set can be corrected by an efficient coding scheme with optimal rate.

Also, we derive a computational condition under which
samplable additive errors can be corrected.
Intuitively, the condition is that a variant of the sampling algorithm of $Z$ is efficiently ``invertible''.
See Section~\ref{sec:compcond} for the details.
The result implies that the existence of one-way functions is necessary for proving the impossibility results for correcting samplable errors.

\begin{table*}
  \centering
\caption{Correctability of Samplable Additive Error $Z$}\label{tb:result}
\smallskip
\begin{tabular}{clll}
\hline\hline
\textbf{ Entropy $m$} & \textbf{Correctabilities} & \textbf{Assumptions} & \textbf{References} \\ \hline\hline

& $\forall$ flat $Z$, & & \\
--- & (1) $\exists$ code correcting $Z$ for   & None & Proposition~\ref{prop:random}\\
& \, \quad $R \leq 1 - \textcolor{black}{{m}/{n}} - \textcolor{black}{o(1)}$ in time $O(n^22^m) $ & & \\
& (2) not correctable for $R > 1 - \textcolor{black}{m/n} \textcolor{black}{- o(1)}$ & None & Proposition~\ref{prop:flatbound} \\ \hline


$n^\epsilon$ &   & & \\ 
\raisebox{0.1ex}[0pt]{$(0 < \epsilon < 1)$} & \raisebox{1.2ex}[0pt]{$\exists \, Z$  not efficiently correctable for any $R$}
& \raisebox{1.2ex}[0pt]{OWF} & \raisebox{1.2ex}[0pt]{Proposition~\ref{prop:prg}}
\\\hline



& $\exists$ $Z$ with membership test, not efficiently & & \\
\raisebox{1.2ex}[0pt]{$\omega(\log n)$} &  correctable for $R < 1 - \textcolor{black}{m/n} - o(1)$ & \raisebox{1.2ex}[0pt]{Oracle access} & \raisebox{1.2ex}[0pt]{Corollary~\ref{cor:lowentropy}} \\ \hline

& $\forall$ flat $Z$ over a linear subspace of dim. $m$, & & \\ 
\raisebox{1.2ex}[0pt]{---} & \raisebox{0.1ex}[0pt]{$\exists$ code correcting $Z$ for $R \leq 1-\textcolor{black}{{m}/{n}}$} & \raisebox{1.2ex}[0pt]{None} & \raisebox{1.2ex}[0pt]{Proposition~\ref{prop:subspace}}
\\\hline

--- & $\forall$ flat $Z = f(U_r)$  is efficiently correctable for  & $g$ is not &  \\
& $R \leq 1 - \textcolor{black}{m/n} - \textcolor{black}{o(1)}$ & distOWF & \raisebox{1.2ex}[0pt]{Theorem~\ref{thm:distow}}\\
\hline\hline



\end{tabular}
\end{table*}

We summarize our main results in Table~\ref{tb:result},
where $R$ denotes the rate of coding schemes, and $m$ the entropy of $Z$.

\ignore{
\paragraph{Necessity of one-way functions.}

Finally, we show that it is difficult to
prove \emph{unconditional} impossibility results 
for coding schemes of rate $R \leq 1 - \textcolor{black}{m}/n$, \textcolor{black}{where $m$ is the entropy of $Z$}.
Specifically, we show that if one-way functions do not exist, then any samplable flat distribution
is correctable by an efficient coding scheme of rate $1-\textcolor{black}{m}/n-o(1)$.
Thus, it is necessary to assume the existence of one-way functions or oracle access to derive 
impossibility results for rate $R  \leq 1 - \textcolor{black}{m}/n$.

The results are summarized in Table~\ref{tb:result},
where $R$ denotes the rate of coding schemes.
}

\subsection{Related Work}

The notion of computationally-bounded channel was introduced by Lipton~\cite{Lip94}.
He showed that if the sender and the receiver can share secret randomness,
then the Shannon capacity can be achieved for this channel.
Micali et al.\,\cite{MPSW10} considered a similar channel model in a public-key setting, 
and gave a coding scheme based on list-decodable codes and digital signature.
Guruswami and Smith~\cite{GS16} gave constructions of
capacity achieving codes for worst-case additive-error channel 
and time/space-bounded channels.
They also gave an impossibility result for bit-fixing channels,
but their result can be applied to channels that use the information on the code and the transmitted codeword.
Shaltiel and Silbak~\cite{SS16} gave explicit list-decodable codes for computationally-bounded channels
based on complexity assumptions.
Cryptographic treatment of codes against computationally-bounded channels was studied in~\cite{Yas16}.
Note that all the previous work of computationally-bounded channels assumes that
the error rate is bounded above by some constant $p \in [0,1]$,
and codes can be constructed based on  the knowledge of $p$.

Additive-error channels have been studied in the literature~\cite{CN88,CN89,Lan08,GS16}.
In the previous studies,
the error rate is bounded, and the channel cannot see the transmitted codeword,
but can depend on the coding scheme.
In the present study, stronger obliviousness is considered, in which
the channel cannot depend on the code.

Samplable distributions were also studied in the context of 
data compression~\cite{GS91,TVZ05,Wee04}, 
randomness extractor~\cite{TV00,Vio11,DW12}, 
and randomness condenser~\cite{DRV12}.
Samplable distributions with membership test appeared in the study of efficient compressibility of samplable sources~\cite{GS91,TVZ05,Wee04}.

\ignore{
In this work, \textcolor{black}{we study the correctability of a variant of computationally-bounded channels,
  called} \emph{samplable additive channels}, in which errors are sampled by efficient computation and added
to the codeword in an oblivious way.
\textcolor{black}{There are mainly two differences between this model and existing ones.
  First, we assume that errors are sampled without the knowledge of the code or the transmitted codeword.
  This obliviousness is stronger than one studied in~\cite{Lan08}, where the channel can depend on the code, but not the codeword.
  Second, we do \emph{not} assume that the error rate is a priori bounded.}
\ignore{
More precisely, errors are sampled by a probabilistic polynomial-time algorithm, but the algorithm does not depend on the choice of the code or the transmitted codeword.
This is stronger than the standard notion of obliviousness, where an oblivious channel can depend on the code, but not the codeword (cf.~\cite{Lan08}).
}

Although most of the work in the literature focuses on bounded error-rate settings,
this restriction might not be necessary for modeling errors generated by nature as a result of polynomial-time computation.
We believe it is worth studying unbounded error-rate settings
since exploring the correctability in unbounded error-rate settings can reveal what \emph{error structures} can help to achieve error correction.
In particular,  the study on samplable additive errors can reveal what \emph{computational} structures of errors are necessary to be corrected.
}

\subsection{Organization}
In Section~\ref{sec:pre}, we give the formal model of the problem, and introduce several notions of error-correcting codes.
Several results on the correctabilities of flat distributions are presented in Section~\ref{sec:flat}.
The negative result of correcting pseudorandom errors appears in Section~\ref{sec:pseudorandom}.
In Section~\ref{sec:membership}, we show the existence of an error distribution with membership test
that is not efficiently correctable.
The positive results are provided in Section~\ref{sec:positive}.

\section{Preliminaries}\label{sec:pre}


For $n \in \N$, we write $[n]$ as the set $\{1, 2, \dots, n\}$.
For a distribution $X$, we write $x \sim X$ to indicate that $x$ is chosen according to $X$.
We may use $X$ also as a random variable distributed according to $X$.
The \emph{support} of $X$ is $\Supp(X) = \{ x : \Pr_X(x) \neq 0\}$,
where $\Pr_X(x)$ is the probability that $X$ assigns to $x$. 
The \emph{Shannon entropy} of $X$ is given by
\textcolor{black}{$- \sum_{x \in \Supp(X)}\Pr_X(x) \log \Pr_X(x)$.}
For flat distributions, the Shannon entropy is equal to the \emph{min-entropy}, \textcolor{black}{which is defined
to be $\min_{x \in \Supp(X)}\{ - \log \Pr_X(x)\}$}.
Thus, we simply say that a flat distribution $Z$ has entropy $m$ if its Shannon entropy is $m$.
For $n \in \N$, we write $U_n$ as the uniform distribution over $\bin^n$.


We define the notion of additive-error correcting codes.

\begin{definition}[Additive-error correcting codes]
For two functions $\Enc : \F^{k} \to \F^n$ and $\Dec: \F^n \to \F^{k}$,
and a distribution $Z$ over $\F^n$, where \textcolor{black}{$k \leq n$ and} $\F$ is a finite field,
we say $(\Enc, \Dec)$ \emph{corrects (additive error) $Z$ with error  \textcolor{black}{probability} $\epsilon$}
if for any $x \in \F^{k}$, we have that
\begin{equation*}
  \Pr_{z \sim Z}[\Dec(\Enc(x) + z) \neq x] \leq \epsilon.
 \end{equation*}
\textcolor{black}{When $\epsilon = 0$, we simply say $(\Enc, \Dec)$ corrects $Z$.}
The \emph{rate} of $(\Enc, \Dec)$ is $k/n$.
\end{definition}

\begin{definition}
A distribution $Z$ is said to be \emph{correctable with rate $R$ and error \textcolor{black}{probability} $\epsilon$}
if there is a pair of functions $(\Enc, \Dec)$ of rate $R$ that corrects $Z$ with error \textcolor{black}{probability} $\epsilon$.
\end{definition}

We call a pair $(\Enc, \Dec)$ a \emph{coding scheme} or simply \emph{code}.
The coding scheme is called \emph{efficient} if $\Enc$ and $\Dec$ can be computed in polynomial-time in $n$.
The code is called \emph{linear}
if $\Enc$ is a linear mapping, 
that is, for any $x, y \in \F^n$ and $a, b \in \F$, $\Enc(ax+by) = a\,\Enc(x)+b\,\Enc(y)$.
If $|\F| = 2$, we may use $\bin$ instead of $\F$. 
For any linear code $(\Enc, \Dec)$, there is a matrix $G \in \F^{k \times n}$, called a
\emph{generator matrix}, such that $\Enc(x) = x \cdot G$ for all $x \in \F^k$.
We usually assume that $G$ has full rank, namely, the rank of $G$ is $k$.
  A matrix $H \in \F^{(n-k) \times n}$ is called a \emph{parity-check matrix} if
  for any $c \in \F^n$, $c = \Enc(x)$ for some $x \in \F^k$ if and only if $H c^T = 0$.
  Then, it holds that $GH^T =0$.
  It is well-known that given a parity-check matrix $H \in \F^{(n-k)\times n}$ of a code,
  one can compute a generator matrix $G$ of the code by finding a basis of the kernel of $H$.
  See~\cite{MWS78,Roth06} for the basic properties of linear codes.

Next, we define syndrome decoding for linear codes.
\textcolor{black}{Suppose $v = \Enc(x)+z \in \F^n$ is received,
  where $\Enc(x)$ is the transmitted codeword and $z$ is the error vector.
  Syndrome decoder associates $e$ with $v \cdot H^T$, called the \emph{syndrome} of $v$,
  because it holds that $v \cdot H^T = (\Enc(x) + z) \cdot H^T = x \cdot G \cdot H^T + z \cdot H^T = z \cdot H^T$.
  If there is a way for recovering $z$ from $z \cdot H^T$, we can recover $x$ from $v$.
}

\begin{definition}
For a linear code $(\Enc, \Dec)$,
$\Dec$ is said to be a \emph{syndrome decoder}
if there is a function $\Rec : \F^{\textcolor{black}{n-k}} \to \F^n$ such that
$\Dec(\textcolor{black}{v}) = (\textcolor{black}{v} - \Rec(\textcolor{black}{v} \cdot H^T)) \cdot G^{-1}$,
where
$G$ and $H$ are a generator matrix and a parity-check matrix of the code, respectively, and
$G^{-1} \in \F^{n \times \textcolor{black}{k}}$ is a right inverse matrix of $G$ (i.e., $G G^{-1}=I$).
\end{definition}
Note that for any full-rank matrix $G$, there always exist a right inverse matrix of $G$.
In the above definition, $\Rec$ recovers the error vector $e$ from the syndrome $v \cdot H^T$. Since $v - e = \Enc(x)$, $x$ is obtained by multiplying $G^{-1}$ by $\Enc(x) = x \cdot G$.

We consider a computationally-bounded analogue of additive-error channels.
We introduce the notion of samplable distributions.

\begin{definition}
A distribution family $Z = \{Z_n\}_{n \in \N}$ is said to be \emph{samplable} if
there is a probabilistic polynomial-time algorithm $S$ such that
$S(1^n)$ is distributed according to $Z_n$ for every $n \in \N$,
\textcolor{black}{where $1^n$ is the string consisting of $n$ ones.}
\end{definition}

We consider the setting in which coding schemes can depend on the sampling algorithm of $Z$,
but not on its random coins,
and $Z$ does not use any information on the coding scheme or transmitted codewords.
In this setting, the randomization of coding schemes does not help much.
\begin{proposition}\label{pro:deterministic}
Let $(\Enc, \Dec)$ be a randomized coding scheme
that corrects a distribution $Z$ with error \textcolor{black}{probability} $\epsilon$.
Then, there is a deterministic coding scheme that corrects $Z$ with error \textcolor{black}{probability} $\epsilon$.
\end{proposition}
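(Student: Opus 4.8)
The plan is to derandomize the scheme's internal coins by an averaging argument, exploiting that $Z$ is oblivious: the sampler of $Z$ depends only on $1^n$, not on the code, the transmitted codeword, or the coins, so for a fixed message the error over $Z$ is a well-defined quantity for each fixing of the coins. Write $\Enc(\cdot\,;r)$ and $\Dec(\cdot\,;r)$ for the scheme run with coin string $r$. Since $(\Enc,\Dec)$ corrects $Z$ with error $\epsilon$ and $Z$ is independent of $r$, for every message $x\in\F^{k}$ we have
\[
\Exp_{r}\!\left[\,\Pr_{z\sim Z}\bigl[\Dec(\Enc(x;r)+z;r)\neq x\bigr]\right]\;\le\;\epsilon .
\]
First I would fix the encoder's coins message by message: for each $x$, averaging over $r$ yields a string $r_{x}$ with $\Pr_{z\sim Z}[\Dec(\Enc(x;r_{x})+z;r_{x})\neq x]\le\epsilon$, and hard-wiring these choices gives a deterministic encoder $\Enc'(x):=\Enc(x;r_{x})$ of the same rate. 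Choosing $r_{x}$ per message is legitimate, because the encoder is handed $x$ and the table $x\mapsto r_{x}$ is part of the scheme's description, which is allowed to depend on the sampler of $Z$. If the decoder is deterministic, setting $\Dec':=\Dec$ finishes the argument, since no message's error changes.

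The remaining case --- a randomized decoder --- is where I expect the real difficulty, since the decoder is not given $x$, so its coins cannot be optimized per message, and no single fixing of the decoder's coins need work for all $x$ simultaneously. My plan here is to discard the given decoder and instead pair a deterministic encoder with a deterministic decoder built from $Z$, for instance the maximum-likelihood rule $\Dec'(y):=\arg\max_{x'}\Pr_{z\sim Z}[\,z=y-\Enc'(x')\,]$, which the scheme may compute because the code is allowed to depend on the sampler of $Z$, and then to show that the resulting worst-over-messages error is at most $\epsilon$.

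The subtle point, and what I expect to be the main obstacle, is that for a \emph{fixed} encoder a randomized decoder can genuinely outperform every deterministic decoder on the worst message, so the comparison must also use the freedom to re-choose the encoder --- the encoder and decoder derandomizations cannot be carried out independently. Concretely, I would recast the designer's choice of a deterministic pair $(\Enc',\Dec')$ against an adversarial choice of message as a two-player zero-sum game, observe that the given randomized scheme certifies the game value to be at most $\epsilon$, and then prove that the designer has an \emph{optimal pure strategy}; this is the step where the additive structure of the channel, together with the fact that we may take $\Enc'$ injective, must enter. (If one is willing to lose a constant factor --- replacing "$\epsilon$" by "$2\epsilon$" in the conclusion --- this last step can be bypassed by rounding each received word's output distribution to its most likely value and applying Markov's inequality; recovering the exact bound $\epsilon$ is the delicate part.)
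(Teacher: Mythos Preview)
Your first case --- deterministic decoder, randomized encoder --- is precisely the paper's argument: for each $x$ average over the coins to obtain $r_x$ with $\Pr_{z\sim Z}[\Dec(\Enc(x;r_x)+z)\neq x]\le\epsilon$, then set $\Enc'(x)=\Enc(x;r_x)$ and keep $\Dec$. The paper's proof treats only this case (it opens with ``Assume that $\Enc$ uses at most $\ell$-bit randomness'' and concludes with the pair $(\Enc',\Dec)$, never modifying $\Dec$), and the surrounding discussion motivates the proposition by contrast with the randomized \emph{encoding} of Guruswami--Smith. So for the statement as the paper intends and proves it, your first paragraph already suffices and matches the paper exactly.

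Your second case --- a randomized decoder --- goes beyond what the paper addresses. The obstacle you flag is genuine: the decoder is not handed $x$, so its coins cannot be fixed per message, and a single fixing of the decoder's randomness need not be simultaneously good for all $x$. The directions you sketch (replace $\Dec$ by a maximum-likelihood rule built from $Z$; or cast the choice of a deterministic pair against an adversarial message as a zero-sum game and argue that the additive structure forces a pure optimal designer strategy) are plausible, but neither is carried out, and as you yourself note, recovering the exact constant $\epsilon$ rather than $2\epsilon$ is the delicate part. This is an interesting extension to pursue, but it is not part of the paper's proof and is not needed to establish the proposition as stated there.
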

\begin{proof}
Assume that $\Enc$ uses at most $\ell$-bit randomness.
Since $(\Enc, \Dec)$ corrects $Z$ with error \textcolor{black}{probability} $\epsilon$,
we have that for every $x \in \F^k$, $\Pr_{z \sim Z,r \sim U_\ell}[\Dec(\Enc(x;r)+z) \neq x] \leq \epsilon.$
By the averaging argument, for every $x \in \F^k$, there exists $r_x \in \bin^\ell$ such that
$\Pr_{z \sim Z}[\Dec(\Enc(x;r_x) + z)\neq x] \leq \epsilon.$
Thus, by defining $\Enc'(x) = \Enc(x;r_x)$,
the deterministic coding scheme $(\Enc', \Dec)$ corrects $Z$ with error \textcolor{black}{probability} $\epsilon$.
\end{proof}

\textcolor{black}{The above result reveals the crucial difference between our setting and that of
  Guruswami and Smith~\cite{GS16},
  where the channels can use the information on the coding scheme, but not transmitted codewords.}
They present a randomized coding scheme with optimal rate $1-h_2(p)$ for worst-case additive-error channels,
for which deterministic coding schemes are only known to achieve rate $1-h_2(2p)$ \textcolor{black}{for $p < 1/2$},
where $p$ is the error rate of the channels.

\ignore{
Next, we define the notion of data compression.
\begin{definition}
For two functions $\Com : \F^* \to \F^*$ and $\Decom: \F^* \to \F^*$, 
and a distribution $Z$,
we say $(\Com, \Decom)$ \emph{compresses} $Z$ to length $m$ if
\begin{enumerate}
\item For any $z \in \Supp(Z)$, $\Decom( \Com(z) ) = z$, and
\item $E[ | \Com(Z) |] \leq m$.
\end{enumerate}
\end{definition}

\begin{definition}
We say a distribution $Z$ is \emph{compressible} to length $m$,
if there are two functions $\Com$ and $\Decom$ such that
$(\Com, \Decom)$ compresses $Z$ to length $m$.
\end{definition}

If $\Com$ is a linear mapping, $(\Com, \Decom)$ is called a \emph{linear} compression.

Finally, we define the notion of lossless condensers.
\begin{definition}
A function $f : \F^n \times \bin^d \to \F^r$ is said to be an $(m, \epsilon)$-\emph{lossless condenser}
if for any distribution $X$ of min-entropy $m$,
the distribution $(f(X,Y),Y)$ is $\epsilon$-close to a distribution $(Z, U_d)$ with min-entropy at least $m+d$,
where $Y$ is the uniform distribution over $\bin^d$.
A condenser $f$ is \emph{linear} if for any fixed $z \in \bin^d$, any $x, y \in \F^n$ and $a, b \in \F$,
$f(ax+by,z) = af(x,z)+bf(y,z)$.
\end{definition}
} 

\section{Errors from Flat Distributions}\label{sec:flat}

We investigate the correctability of general flat distributions over $\bin^n$.



First, we show that, for any flat distribution $Z$ of entropy $m$, a random linear code of length $n$ and rate $R$ can correct $Z$ with error probability $1 - 2^{-\Omega(n)}$
for $R < 1 - m/n - o(1)$.
Consider a random linear code of length $n$ and rate $R$ such that
the parity check matrix $H$ is chosen uniformly at random from $\mathcal{H}_{n,R} = \bin^{(n-Rn)\times n}$,
\textcolor{black}{and a generator matrix $G \in \bin^{Rn \times n}$
  is obtained by finding a basis of the kernel of $H$.
  We use the syndrome decoding. Specifically,
  for a received word $v$, the function $\Rec$ of the decoder is defined
  such that it maps $v \cdot H^T$ to unique $z \in \Supp(Z)$ satisfying $v \cdot H^T = z \cdot H^T$ 
  by searching for all possible $z$. If there are multiple candidates for $z$,
  it outputs the decoding failure.}
Let $\mathcal{C}_{n,R}$ be the set of codes in which
each code is defined by using each element in $\mathcal{H}_{R,n}$ as a parity-check matrix.



\begin{proposition}\label{prop:rand}
For any flat distribution $Z$ over $\bin^n$ of entropy $m$,
a $(1-\sqrt{\epsilon})$-fraction of codes in $\mathcal{C}_{n,R}$ corrects $Z$ with error probability $\sqrt{\epsilon}$ 
\textcolor{black}{for $R < 1 - m/n$}, where $\epsilon = 2^{-n(1-R-m/n)}$.
\end{proposition}
\begin{proof}
  The probability that a random code from $\mathcal{C}_{n,R}$ fails the decoding is 
  \begin{align}
    & \Pr_{H \in \mathcal{H}_{n,R}, z \sim Z}\left[ \exists z' \in \Supp(Z) \setminus \{z\} : z \cdot H^T = z' \cdot H^T \right] \nonumber\\
    & = \Pr_{H, z} \left[ \exists z' \!\in \Supp(Z) \setminus \{z\} : 
        \forall i \in [n-Rn],
        h_i\cdot(z - z') = 0
      \right] \label{eq:2}\\
    & = \!\! \!\!\sum_{z \in \Supp(Z)} \!\!\!\!\!\! \Pr[z \sim Z]  \Pr_{H}\! \left[ \!\bigcup_{z' \in \Supp(Z) \setminus \{z\}}\!\!\left[
          \forall i \in [n-Rn],
          h_i\cdot(z - z') = 0
        \right] \!\right] \nonumber \\
    & \leq \!\! \sum_{z \in \Supp(Z)} \! \! \Pr[z \sim Z] \! \! \sum_{z' \in \Supp(Z) \setminus \{z\}} \!\! \Pr_H \left[
      \forall i \in [n-Rn], 
      h_i\cdot(z - z') = 0 
      \right] \nonumber\\
    & = \sum_{z \in \Supp(Z)} 2^{-m} \cdot (2^m-1) \cdot \left( \frac{1}{2} \right)^{n-Rn} \label{eq:3}\\ 
    & \leq \epsilon,\label{eq:4}
\end{align}
  where $H^T = (h_1^T, \dots, h_{n-Rn}^T)$ in (\ref{eq:2}),
  the first inequality follows from a union bound, and
  (\ref{eq:3}) follows from the facts that for any non-zero $a \in \bin^n$ and random $h \in \bin^n$, $\Pr_h[ h \cdot a = 0] = 1/2$,
  and that $|\Supp(Z)| = 2^m$ for flat $Z$.
Then, a $(1 - \sqrt{\epsilon})$-fraction of codes in $\mathcal{C}_{n,R}$ can correct $Z$ with error probability $\sqrt{\epsilon}$,
since otherwise (\ref{eq:4}) does not hold.
Therefore, the statement follows.
\end{proof}

\ignore{
For each $z \in \Supp(Z)$, the probability that a random code from $\mathcal{C}_{n,R}$ has the same syndrome for other $z' \in \Supp(Z)$ is
\begin{align}
& \Pr_{H \in \mathcal{H}_{n,R}}\left[ \exists z' \in \Supp(Z) \setminus \{z\} : z \cdot H^T = z' \cdot H^T \right] \nonumber\\
  & = \!\Pr_{H \in \mathcal{H}_{n,R}} \!\left[ \exists z' \!\in \Supp(Z) \setminus \{z\} : 
    \begin{aligned} &\forall i \in [n-Rn], \\& h_i\cdot(z - z') = 0 \end{aligned} \!\right] \label{eq:2}\\
& \leq \sum_{z' \in \Supp(Z) \setminus \{z\}} \prod_{i \in [n-Rn]} \Pr_{h_i \in \bin^n}\left[ h_i\cdot(z - z') = 0 \right] \nonumber\\
& \leq 2^m \cdot (1/2)^{n-Rn} = 2^{-n(1-R-m/n)},\nonumber
\end{align}
where $H^T = (h_1^T, \dots, h_{n-Rn}^T)$ in (\ref{eq:2}),
the first inequality follows from a union bound, and
the last inequality follows from the fact that $z-z' \neq 0$ and $|\Supp(Z)| = 2^{m}$ for flat $Z$.
The error probability that a code defined by $H \in \mathcal{H}_{n,R}$ fails the decoding is
\begin{align*}
& \Pr_{z \sim Z}\left[ \exists z' \in \Supp(Z) \setminus \{z\} : z \cdot H^T = z' \cdot H^T \right] 
\end{align*}

Thus, a random linear code has a unique syndrome for every $z \in \Supp(Z)$ is at least $1-2^{-n(1-R-m/n)}$,
and such a code can correct every error from $Z$.
}

Thus, we have the following proposition.
\begin{proposition}\label{prop:random}
Let $Z$ be any flat distribution over $\bin^n$ of entropy $m$.
There is a linear code of rate $R$ that corrects $Z$  with error probability $\epsilon$
for $R = 1 - m/n - 2\log(\epsilon^{-1})/n$. 
The decoding complexity is at most $O(n^22^m)$.
\end{proposition}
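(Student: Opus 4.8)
The plan is to deduce Proposition~\ref{prop:random} directly from Proposition~\ref{prop:rand} by an averaging (Markov-type) argument over the choice of parity-check matrix, followed by a bookkeeping check on the decoding complexity. Proposition~\ref{prop:rand} gives that a uniformly random $H \in \mathcal{H}_R$ yields a code correcting $Z$ with \emph{expected} error $2^{-(n-Rn-H(Z))}$; more precisely, for each fixed $z$ the collision probability over $H$ is at most $2^{-(n-Rn-H(Z))}$, so by linearity of expectation the expected fraction of $z \sim Z$ that fail to have a unique syndrome is at most $2^{-(n-Rn-H(Z))}$. Writing $\delta = 2^{-(n-Rn-H(Z))}$, Markov's inequality then says that for at least a $(1-\sqrt{\delta})$-fraction of matrices $H \in \mathcal{H}_R$, the induced code has decoding-failure probability at most $\sqrt{\delta} = 2^{-(n-Rn-H(Z))/2}$ on input distribution $Z$ (where "failure" means the brute-force decoder cannot uniquely recover $x$). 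In particular, at least one such good $H$ exists; fix it and let $(\Enc,\Dec)$ be the corresponding linear code with the brute-force syndrome decoder described before Proposition~\ref{prop:rand}.

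Next I would translate the rate condition: setting $m = H(Z)$ and imposing $\sqrt{\delta} \le \epsilon$, i.e. $2^{-(n-Rn-m)/2} \le \epsilon$, is equivalent to $n - Rn - m \ge 2\log(1/\epsilon)$, i.e. $R \le 1 - m/n - 2\log(1/\epsilon)/n$. So for every $R$ in this range there is a linear code of rate $R$ correcting $Z$ with error $\epsilon$, which is exactly the claimed statement. (One should note the code produced is non-explicit, since it only comes from a counting/averaging argument — this matches the phrasing "(non-explicit)" in Table~\ref{tb:result}.)

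Finally, the decoding-complexity claim: on a received word $y$, the decoder enumerates $z \in \Supp(Z)$, computes $y - z$, and checks whether $y-z$ lies in the code (equivalently, whether $(y-z)H^T = 0$), returning the corresponding message $x$ when it does. There are $|\Supp(Z)| = 2^m$ candidates, and for each the syndrome check costs $O(n^2)$ field operations (a matrix-vector product with the $(n-Rn)\times n$ matrix $H$); recovering $x$ from a codeword via a fixed right inverse $G^{-1}$ is another $O(n^2)$. Hence the total is $O(n^2 2^m)$, as stated. I do not expect any real obstacle here — the whole argument is a routine Markov step plus arithmetic — the only thing to be slightly careful about is making the "unique syndrome $\Rightarrow$ correct brute-force decoding" implication precise, namely that when every element of $\Supp(Z)$ has a distinct syndrome, the pair $(x,z)$ with $y = xG + z$ and $z \in \Supp(Z)$ is unique, so the decoder's output equals the transmitted message; this is immediate once one observes that $xG + z = x'G + z'$ with $z,z'\in\Supp(Z)$ forces $z-z'$ to be a codeword, hence $zH^T = z'H^T$, hence $z=z'$ and then $x=x'$.
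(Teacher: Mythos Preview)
Your proposal is correct and follows essentially the same route as the paper: invoke Proposition~\ref{prop:rand}, apply a Markov split to get that a $(1-\sqrt{\delta})$-fraction of $H\in\mathcal{H}_R$ yield error at most $\sqrt{\delta}=2^{-(n-Rn-m)/2}$, translate this into the rate bound $R\le 1-m/n-2\log(1/\epsilon)/n$, and then cost out the brute-force syndrome decoder as $O(n^2)\cdot|\Supp(Z)|=O(n^2 2^m)$. Your extra care in spelling out why unique syndromes on $\Supp(Z)$ force unique decoding is a welcome clarification but not a departure from the paper's argument.
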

\begin{proof}
The existence of such a code immediately follows from Proposition~\ref{prop:rand}.
Given a received word \textcolor{black}{$v$}, the brute-force decoder checks if $(\textcolor{black}{v} - z) \cdot H^T = 0$ for all $z \in \Supp(Z)$,
where $H$ is the parity check matrix. If so, output $x$ satisfying $x \cdot G = \textcolor{black}{v} - z$. Thus, the decoding is done in time $O(n^2) \cdot |\Supp(Z)|$.
\end{proof}

Proposition~\ref{prop:random} implies that for any flat $Z$ of entropy $O(\log n)$,
there is a code that corrects $Z$ with error probability $\epsilon$ in polynomial time.
Although the construction is not fully explicit, we can obtain such a code with probability $1 - \epsilon$. 

\ignore{
Cheraghchi~\cite{Che09} showed a relation between lossless condensers and linear codes correcting additive errors.
He gave the equivalence between a \emph{linear} lossless condenser for a \emph{flat} distribution $Z$
and a linear code ensemble in which most of them correct additive errors from $Z$.
Based on his result, we can show that, for any flat distribution,  there is a linear code that corrects errors from the distribution.

\begin{theorem}\label{th:flatcorrect}
For any $\epsilon > 0$ and flat distribution $Z$ over $\bin^n$ of entropy $m$,
there is a linear code of rate $1 - m/n - 4\log(1/\epsilon)/n$ that corrects $Z$ with error \textcolor{black}{probability} ${\epsilon}$ by syndrome decoding.
\end{theorem}
\begin{proof}
Let $f : \bin^n \times \bin^d\to \bin^r$ be a linear $(m,\epsilon)$-lossless condenser.
Define a code ensemble $\{C_u\}_{u \in \bin^d}$ such that $C_u$ is a linear code 
for which a parity check matrix $H_u$ satisfies that for each $x \in \bin^n$, $x \cdot H_u^T = f(x,u)$.
Cheraghchi~\cite{Che09} proved the following lemma.
\begin{lemma}[Lemma~15 of \cite{Che09}]\label{lem:flatcorrect}
For any flat distribution $Z$ of entropy $m$,
at least a $(1-2\sqrt{\epsilon})$ fraction of the choices of $u \in \bin^d$,
the code $C_u$ corrects $Z$ with error \textcolor{black}{probability} $\sqrt{\epsilon}$.
\end{lemma}

We use a linear lossless condenser that can be constructed 
from a universal hash family consisting of linear functions.
This is a generalization of the Leftover Hash Lemma and the proof is given in~\cite{Che09}.
\begin{lemma}[Lemma~7 of \cite{Che09}]\label{lem:llc}
For every integer $r \leq n, m,$ and $\epsilon > 0$ with $r \geq m + 2\log(1/\epsilon)$,
there is an explicit $(m,\epsilon)$ linear lossless condenser $f: \bin^n \times \bin^n \to \bin^r$.
\end{lemma}

The statement of the theorem immediately follows by combining Lemmas~\ref{lem:flatcorrect} and~\ref{lem:llc}.
\end{proof}
} 

Conversely, we can show that the rate achieved in Proposition~\ref{prop:random} is almost optimal.

\begin{proposition}\label{prop:flatbound}
Let $Z$ be any flat distribution over $\bin^n$ of entropy $m$.
If a code of rate $R$ corrects $Z$ with error \textcolor{black}{probability} $\epsilon$,
then $R \leq 1 - m/n \textcolor{black}{-} \log(1-\epsilon)/n$.
\end{proposition}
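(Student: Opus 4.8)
The plan is a counting (volume) argument over the "successful decoding regions" of the messages, using that flatness turns the probabilistic correctness guarantee into an exact cardinality bound. Write $\Enc : \bin^{Rn} \to \bin^n$ and $\Dec : \bin^n \to \bin^{Rn}$ for the code, and recall that since $Z$ is flat we have $|\Supp(Z)| = 2^m$ with each support element carrying probability exactly $2^{-m}$. Correctness with error $\epsilon$ means that for every message $x \in \bin^{Rn}$, $\Pr_{z \sim Z}[\Dec(\Enc(x)+z) = x] \geq 1 - \epsilon$.

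First I would, for each $x \in \bin^{Rn}$, define the set
\[
A_x \;=\; \{\,\Enc(x) + z \;:\; z \in \Supp(Z),\ \Dec(\Enc(x)+z) = x\,\} \;\subseteq\; \bin^n .
\]
Since the map $z \mapsto \Enc(x) + z$ is a bijection of $\bin^n$, and by flatness the fraction of $z \in \Supp(Z)$ with $\Dec(\Enc(x)+z)=x$ is at least $1-\epsilon$, the set $A_x$ has size at least $(1-\epsilon)\,2^m$. Next I would observe that the sets $\{A_x\}_{x}$ are pairwise disjoint: if some $e$ lay in $A_x \cap A_{x'}$ then $\Dec(e)$ would equal both $x$ and $x'$, forcing $x = x'$ because $\Dec$ is a function. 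Hence
\[
2^n \;\geq\; \Bigl|\,\bigcup_{x \in \bin^{Rn}} A_x\,\Bigr| \;=\; \sum_{x \in \bin^{Rn}} |A_x| \;\geq\; 2^{Rn}\cdot (1-\epsilon)\,2^m ,
\]
and taking logarithms gives $Rn + m + \log(1-\epsilon) \leq n$, i.e. $R \leq 1 - m/n + \log(1/(1-\epsilon))/n$, which is the claimed bound.

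I do not expect a genuine obstacle here: the argument is essentially a one-line Singleton-type packing bound and needs no structural hypothesis on the code (no injectivity of $\Enc$, no linearity, no efficiency). The one place to be slightly careful is precisely the step that invokes flatness to convert "probability at least $1-\epsilon$" into "at least $(1-\epsilon)\,2^m$ distinct good noise vectors"; for a general (non-flat) $Z$ the same reasoning would only bound a probability mass rather than a cardinality, so the disjointness of the $A_x$ would no longer yield a clean cardinality inequality — this is exactly why the proposition is stated for flat $Z$. It is worth noting that, combined with Proposition~\ref{prop:random}, this pins down the optimal rate for correcting a flat distribution of entropy $m$ to within an additive $O((\log(1/\epsilon))/n)$ term around $1 - m/n$.
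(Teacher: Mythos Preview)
Your proof is correct and essentially identical to the paper's: the paper defines the full decoding regions $D_x=\{y:\Dec(y)=x\}$, argues $|D_x|\geq(1-\epsilon)2^m$ from flatness, and uses their disjointness to get $(1-\epsilon)2^m\cdot 2^{Rn}\leq 2^n$. Your sets $A_x$ are just the subsets $D_x\cap(\Enc(x)+\Supp(Z))$, which makes the lower bound on their size a touch more explicit, but the counting argument is the same.
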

\begin{proof}
Let $(\Enc, \Dec)$ be a code that corrects $Z$ with error \textcolor{black}{probability} $\epsilon$.
For $x \in \bin^{Rn}$, define $D_x = \{ \textcolor{black}{v} \in \bin^n : \Dec(\textcolor{black}{v}) = x\}$.
That is, $D_x$ is the set of inputs that are decoded to $x$ by $\Dec$.
Since the code corrects the flat distribution $Z$ with error \textcolor{black}{probability} $\epsilon$,
$|D_x| \geq (1-\epsilon)2^m$ for every $x \in \bin^{Rn}$.
Since each $D_x$ is disjoint, $\sum_{x \in \bin^{Rn}}|D_x| \leq 2^n$.
Therefore, we have that $(1-\epsilon)2^m\cdot 2^{Rn} \leq 2^n$,
which implies the statement. 
\end{proof}

By Proposition~\ref{prop:random}, 
one may hope to construct a \emph{single} code that corrects
errors from any flat distribution with the same entropy,
as constructed in~\cite{Che09} for the case of binary symmetric channels
by using Justesen's construction~\cite{Jus72}.
However, it is impossible to construct such codes.
We show that for every deterministic coding scheme of rate $k/n$,
there is a flat distribution $Z$ of entropy $m$ that is not correctable by the scheme for any $1 \leq m \leq k$.
By combining this result with Proposition~\ref{pro:deterministic},
we can conclude that there is no coding scheme of rate $k/n$ that corrects every flat distribution of entropy $m$ with $1 \leq m \leq k$.
For deterministic coding schemes, we assume that the encoding function $\Enc$ is injective.
  Namely, for any distinct $x, x' \in \F^k,$ $\Enc(x) \neq \Enc(x')$.
  The assumption is necessary because otherwise the code always fails to decode 
  either $x$ or $x'$ such that $\Enc(x) = \Enc(x')$.


\begin{proposition}\label{prop:linear}
For any deterministic code of rate $k/n$ and any $m$ with $1 \leq m \leq k$,
there is a flat distribution of entropy $m$ that is not correctable by the code with error \textcolor{black}{probability} $\epsilon < 1/2$.
\end{proposition}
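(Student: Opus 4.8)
The plan is to exploit the fact that the $2^{k}$ decoding regions of the code partition $\bin^{n}$, so the smallest of them is small relative to the whole space; one can then concentrate a flat error distribution of entropy $m$ on a translate of points that avoid this small region, forcing the corresponding message to be mis-decoded at least half the time.

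First I would fix the deterministic code $(\Enc,\Dec)$ and, for each $x \in \bin^{k}$, set $D_{x} = \{\, y \in \bin^{n} : \Dec(y) = x \,\}$. These sets are pairwise disjoint and partition $\bin^{n}$, so $\sum_{x}|D_{x}| = 2^{n}$ and there is a message $x^{*}$ with $|D_{x^{*}}| \le 2^{n-k} \le 2^{n-m}$, the last step using $m \le k$. Consequently its complement satisfies $|\bin^{n} \setminus D_{x^{*}}| \ge 2^{n} - 2^{n-m} \ge 2^{n-1} \ge 2^{m-1}$, where the final two inequalities use $1 \le m \le k \le n$.

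Next I would pick a set $A \subseteq \bin^{n} \setminus D_{x^{*}}$ with $|A| = 2^{m-1}$, which is possible by the bound just derived, and extend it to a set $W$ with $A \subseteq W \subseteq \bin^{n}$ and $|W| = 2^{m}$; there is room because $2^{n} - |A| = 2^{n} - 2^{m-1} \ge 2^{m-1}$. Define $T = \{\, \Enc(x^{*}) + w : w \in W \,\}$. Since $w \mapsto \Enc(x^{*}) + w$ is a bijection on $\bin^{n}$, we have $|T| = 2^{m}$, so the flat distribution $Z$ that is uniform over $T$ has entropy exactly $m$. For the message $x^{*}$, writing $z = \Enc(x^{*}) + w$ gives $\Enc(x^{*}) + z = w$, hence
\[
\Pr_{z \sim Z}\bigl[\Dec(\Enc(x^{*}) + z) \neq x^{*}\bigr] = \frac{|W \setminus D_{x^{*}}|}{2^{m}} \ge \frac{|A|}{2^{m}} = \frac{1}{2},
\]
so $(\Enc,\Dec)$ does not correct $Z$ with error $\epsilon$ for any $\epsilon < 1/2$, and the same $Z$ works for every such $\epsilon$ at once.

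There is no genuinely hard step here; the only care needed is bookkeeping — checking that all the counting inequalities (notably $2^{n-m} \le 2^{n-1}$ and $2^{n} - 2^{m-1} \ge 2^{m-1}$) hold throughout the range $1 \le m \le k \le n$, and that $T$ has size exactly $2^{m}$ so that $H(Z) = m$ rather than merely $H(Z) \le m$. Finally, combining this with Proposition~\ref{pro:deterministic} upgrades the conclusion to randomized codes: no coding scheme of rate $k/n$ can correct every flat distribution of entropy $m$ when $1 \le m \le k$.
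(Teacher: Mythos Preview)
Your argument is correct. It is, however, a genuinely different construction from the paper's. The paper simply takes $Z$ to be uniform over $2^{m}$ distinct codewords $c_{1},\dots,c_{2^{m}}$ (possible since $m\le k$); because $c_{i}+c_{j}=c_{j}+c_{i}$, the decoder on this received word must be wrong for at least one of the two candidate messages $x_{i},x_{j}$, and an (implicit) averaging over the $2^{m}$ messages then produces one with failure probability at least $1/2$. Your route instead singles out a message $x^{*}$ with the smallest decoding region and tailors $Z$ so that half of the received words $\Enc(x^{*})+z$ land outside $D_{x^{*}}$ by construction. The paper's proof is a couple of lines but leans on an unstated averaging step; yours is longer but identifies the failing message explicitly and makes every counting inequality visible. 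Note that your identity $\Enc(x^{*})+z=w$ relies on characteristic~$2$ just as the paper's diagonal case $c_{i}+c_{i}=0$ does; over a general field you would instead take $T=\{\,w-\Enc(x^{*}):w\in W\,\}$, and then the same counting goes through verbatim.
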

\begin{proof}
  \textcolor{black}{By assumption, the code contains $2^k$ codewords that are all distinct.}
Define a flat distribution to be a uniform distribution over 
$2^m$ distinct codewords $c_1, \dots, c_{2^m}$.
If the input to the decoder is $c_i + c_j$ for $i, j \in [2^m]$, the decoder cannot distinguish the two cases 
where the transmitted codewords are $c_i$ and $c_j$.
Thus, the decoder outputs the wrong answer with probability at least $1/2$ for at least one of the two cases.
\end{proof}

\section{Pseudorandom Errors}\label{sec:pseudorandom}

We show that 
no efficient coding scheme can correct pseudorandom errors,
\textcolor{black}{which can be sampled by pseudorandom generators.}

\begin{proposition}\label{prop:prg}
Assume that a one-way function exists.
Then, for any constant $\epsilon \in (0,1)$,
there is a samplable distribution $Z$ over $\bin^n$ of entropy
$n^\epsilon$ such that no polynomial-time algorithms $(\Enc, \Dec)$ can correct $Z$.
\end{proposition}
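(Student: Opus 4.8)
The plan is to take the error distribution to be the output of a pseudorandom generator seeded with $n^{\epsilon}$ bits, and then to argue that any efficient coding scheme of positive rate that corrects such an error could be turned into a statistical test breaking the generator. So the whole statement reduces to (i) building the distribution and (ii) the generic fact that pseudorandom additive errors are uncorrectable.

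First I would fix $0<\epsilon<1$, set $m=m(n)=\lceil n^{\epsilon}\rceil$, and invoke the classical fact that a one-way function implies a pseudorandom generator of arbitrary polynomial stretch. This gives a polynomial-time computable $G:\bin^{m}\to\bin^{n}$ (take a generator of sufficient polynomial stretch with seed length $m$, truncating its output to $n$ bits) whose output on a uniform seed is computationally indistinguishable from $U_n$. I would then define $Z=\{Z_n\}$ by $Z_n=G(U_m)$. This $Z$ is samplable: the sampler draws $r\sim U_m$ and outputs $G(r)$. Moreover $\Supp(Z_n)$ has size at most $2^{m}$, so its Shannon entropy is at most $m\le n^{\epsilon}+1$; if one insists on the exact bound $H(Z_n)\le n^{\epsilon}$, one simply runs the construction with a slightly smaller exponent, which is immaterial.

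Next I would show that $Z$ is not correctable by any efficient coding scheme of positive rate. Suppose for contradiction that an efficient scheme $(\Enc,\Dec)$ of rate $R=k/n$ with $k\ge 1$ corrects $Z$ with error $\epsilon_0<1/2$ (by Proposition~\ref{pro:deterministic} we may take it deterministic). Consider the probabilistic polynomial-time test $D$ that on input $y\in\bin^n$ picks $x\sim U_k$, computes $\Dec(\Enc(x)+y)$, and outputs $1$ iff the result equals $x$. If $y\sim Z_n$, then correctness gives $\Pr_{y\sim Z_n}[\Dec(\Enc(x)+y)=x]\ge 1-\epsilon_0$ for every $x$, so $\Pr[D(Z_n)=1]\ge 1-\epsilon_0>1/2$. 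If $y\sim U_n$, then for each fixed $x$ the vector $\Enc(x)+y$ is uniform, hence $\Dec(\Enc(x)+y)$ is distributed as $\Dec(U_n)$, independently of $x$, so
\[
\Pr[D(U_n)=1]\;=\;\Exp_{x\sim U_k}\!\big[\Pr_{u\sim U_n}[\Dec(u)=x]\big]\;=\;2^{-k}\sum_{x\in\bin^k}\Pr[\Dec(U_n)=x]\;\le\;2^{-k}\;\le\;\tfrac12 .
\]
Thus $D$ distinguishes $Z_n$ from $U_n$ with advantage at least $(1-\epsilon_0)-\tfrac12>0$, contradicting the pseudorandomness of $G$. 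Hence $Z$ cannot be corrected by any efficient scheme with decoding error bounded away from $1/2$ (and a fortiori not with negligible error), at any positive rate.

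The argument is essentially routine, and I do not expect a genuine obstacle beyond invoking the existence of length-expanding pseudorandom generators from one-way functions. The two points that need a little care are calibrating the generator's seed length and stretch so that the output length is exactly $n$ while the entropy bound still holds, and pinning down the error regime implicit in the word "correct": the test above already works whenever the decoding error is below $1/2$ by an inverse polynomial, which is the only interesting regime, since for a truly random channel no positive-rate code recovers even a single message bit with error below $1/2$ (this is the counting bound of Proposition~\ref{prop:flatbound} specialized to $m=n$).
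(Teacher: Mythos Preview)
Your proposal is correct and follows the same approach as the paper: take $Z=G(U_{n^\epsilon})$ for a pseudorandom generator $G$ built from the assumed one-way function, and argue that an efficient decoder would yield a distinguisher. The paper's proof is a three-sentence sketch that only says a distinguisher can be built from $(\Enc,\Dec)$; you have simply supplied that distinguisher explicitly and verified its advantage, which is exactly the routine computation the paper leaves implicit.
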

\begin{proof}
If a one-way function exists,
there is a pseudorandom generator $G : \bin^{n^\epsilon} \to \bin^n$ secure for any polynomial-time algorithm~\cite{HILL99}.
\textcolor{black}{Namely, the distribution $G(U_{n^\epsilon})$ is indistinguishable from the uniform distribution $U_n$ for any polynomial-time algorithm.}
Then, a distribution $Z = G(U_{n^\epsilon})$ is not correctable by polynomial-time algorithms $(\Enc, \Dec)$.
If so, we can construct a polynomial-time distinguisher for pseudorandom generator by employing $(\Enc, \Dec)$,
and thus a contradiction follows.
\end{proof}

\ignore{
\subsection{Uncorrectable Errors with Low Entropy}

We consider errors from low-entropy distributions that 
are not correctable by efficient coding schemes.
We use the relation between error correction by linear code
and data compression by linear compression.
The relation was explicitly presented by Caire et~al.~\cite{CSV04}.


\begin{theorem}[\cite{CSV04}]\label{th:correct_comp}
For any distribution $Z$ over $\F^n$,
$Z$ is correctable with rate $R$ by syndrome decoding if and only if
$Z$ is compressible by linear compression to length $n(1-R)$.
\end{theorem}


Wee~\cite{Wee04} showed that
there is an oracle relative to which
there is a samplable distribution over $\bin^n$ of entropy $\omega(\log n)$
that cannot be compressed to length less than $n - \omega(\log n)$ by any efficient compression.

\begin{lemma}[\cite{Wee04}]\label{lem:imp_comp}
For any $m$ satisfying $6 \log s + O(1) < m < n$,
there are a function $f : \bin^m \to \bin^n$ and an oracle $\mathcal{O}_f$ 
such that given oracle access to $\mathcal{O}_f$,
\begin{enumerate}
\item $f(U_m)$ is samplable, and has entropy $m$.
\item $f(U_m)$ cannot be compressed to length less than $n - 2 \log s - \log n - O(1)$
by oracle circuits of size $s$.
\end{enumerate}
\end{lemma}

By combining Lemma~\ref{lem:imp_comp} and Theorem~\ref{th:correct_comp},
we obtain the following theorem.

\begin{theorem}\label{th:oracle}
For any $m$ satisfying $6 \log s  + O(1) < m < n$,
there are a function $f : \bin^m \to \bin^n$ and an oracle $\mathcal{O}_f$
such that given oracle access to $\mathcal{O}_f$,
\begin{enumerate}
\item $f(U_m)$ is samplable, and has entropy $m$.
\item $f(U_m)$ is not correctable with rate $R > (2 \log s + 7 \log n + O(1))/n$ 
by any linear code $(\Enc, \Dec)$ implemented by an oracle circuit of size $s$,
where $\Dec$ is a syndrome decoder.
\end{enumerate}
\end{theorem}
\begin{proof}
Item~1 is the same as Lemma~\ref{lem:imp_comp}.
We prove Item~2 in the rest.
For contradiction, assume that
there is an oracle circuit of size $s$ such that 
the circuit implements a linear code $(\Enc, \Dec)$ in which $\Dec$ is a syndrome decoder,
and $(\Enc, \Dec)$ corrects $f(U_m)$ with rate $R >  (2 \log s + 7 \log n + O(1) )/n$.
By Theorem~\ref{th:correct_comp},
we can construct $(\Com, \Decom)$ that can compress $f(U_m)$ to length $n(1-R) < n - 2 \log s  - 7 \log n - O(1)$,
and is implemented by an oracle circuit of size $s + n^3$,
where the addition term of $n^3$ is due to the computation of $\Com$,
which is defined as $\Com(z) = z \cdot H^T$. 
This contradicts Lemma~\ref{lem:imp_comp}.
\end{proof}


The following corollary immediately follows.
\begin{corollary}\label{cor:oracle}
For any $m$ satisfying $\omega(\log n) < m  < n$,
there is an oracle relative to which there is a samplable distribution with membership test of entropy $m$
that is not correctable by linear codes of rate $R > \omega((\log n)/n)$ with efficient syndrome decoding.
\end{corollary}


} 

\section{Errors with Membership Test}\label{sec:membership}

\ignore{
To derive this result, we use the technique of Wee~\cite{Wee04},
which is based on the \emph{reconstruction paradigm} of Gennaro and Trevisan~\cite{GGKT05}.
We use his technique for the problem of error correction.
We show that if a samplable distribution with a sampler $S$ is efficiently correctable,
then the function of $S$ has a short description, and thus, by a counting argument, 
efficient coding schemes cannot correct every samplable distribution with membership test.

This negative result seems counterintuitive.
In general, constructing low-rate codes seems to be easier than high-rate codes.
However, the result implies the impossibility of constructing low-rate codes.
The reason for such a result is that 
the reconstruction paradigm crucially uses the fact that some function can be described shortly.
In our case, we use the fact that functions for samplable errors can be described shortly
if the errors are correctable by coding schemes with short descriptions.
Since low-rate codes have short descriptions, the result can be applied to low-rate codes.
-------------------------
}

In this section, we present samplable distributions that are not pseudorandom,
but cannot be corrected by efficient coding schemes. 
For such distributions, we consider distributions for which the membership test can be done efficiently.
A distribution $Z$ is called a \emph{distribution with membership test}
if there is a polynomial-time algorithm $D$ such that
$D(z) = 1 \Leftrightarrow z \in \Supp(Z)$.
Since the algorithm $D$ can distinguish $Z$ from the uniform distribution,
$Z$ is not pseudorandom.

\textcolor{black}{We consider a sampling algorithm/circuit that can access an \emph{oracle},
  which, on querying input $x$, responds with the value $f(x)$ for an a priori specified function $f$.
  It is assumed that the algorithm can obtain the value of any input in a single step,
  and that other algorithms can also access the same oracle.
  Such oracle algorithms/circuits are used in the field of computational complexity~\cite{AB09}.
  It is said to be relative to an oracle if algorithms are allowed to access it.}

We show that there is an oracle relative to which there exists a samplable distribution with membership test
that is not correctable by efficient coding schemes with low rate.
\textcolor{black}{In the proof, we use the technique called \emph{reconstruction paradigm}~\cite{GGKT05,Wee04}.}
Before starting the proof, we briefly describe the technique of~\cite{GGKT05,Wee04} and our proof idea.

In~\cite{GGKT05}, the paradigm is used to prove that a random function is one-way with high probability.
  Roughly speaking, it is shown that if $f$ is not one-way, then $f$ has a ``short'' description.
  Here, we say $f$ has a short description if, given access to some oracle $A$,
  the truth table of $f$ can be reconstructed by using $A$ and some short information.
  It is shown in~\cite{GGKT05} that there are not so many functions with short description,
  and thus a random function is one-way with high probability.
  The technique was used in~\cite{Wee04} to show the existence of incompressible samplable source with low pseudoentropy.


Here, we use the technique to prove the existence of uncorrectable
  samplable distributions with membership test.
  By following the approach of~\cite{Wee04}, we define a class of functions $\correctf$,
  which contains functions $f$ that can be efficiently corrected by some coding scheme.
  Then, we show that $\correctf$ has a short description.
  Since there are not so many functions with short description,
  we can show that there is a function $f$ that cannot be corrected with efficient coding schemes.

We now begin the formal proof.
Let $N = 2^n, K = 2^k, M = 2^m$.
Let $\mathcal{F}$ be the set of injective functions $f : \bin^m \to \bin^n$.
For each $f \in \calF$, define an oracle $\calO_f$ such that
\begin{align*}
 \calO_f(b,y)  & =
\begin{cases}
\calO_f^S(y) & \text{if } b = 0, y \in \bin^m\\
\calO_f^M(y) & \text{if } b = 1, y \in \bin^n\\
\bot & \text{otherwise}
\end{cases},\\
 \calO^M_f(y)  & = 
\begin{cases}
1 & \text{if } y \in f(\bin^m)\\
0 & \text{if } y \notin f(\bin^m)\\
\end{cases}, \\
 \calO^S_f(y)  & = f(y).
\end{align*}
\textcolor{black}{Namely, on input $y$, the oracle $\calO^M_f$ outputs the membership of $y$ under $f$,
  and $\calO^S_f$ samples $f(y)$.
  Thus, sampling $f(y)$ can be performed by querying $(0,y)$ to $\calO_f$, and membership test of $y$ can be done by $(1,y)$.}

Let $\correctf$ be the set of functions $f \in \calF$ for which
there exist oracle circuits $(\Enc, \Dec)$ that make $q$ queries to oracle $\calO_f$ \textcolor{black}{in total}
and correct $f(U_m)$ with rate $k/n$.
For each
\textcolor{black}{$f \in \correctf$, fix a pair $(\Enc, \Dec)$ that make $q$ queries to $\calO_f$ and correct $f(U_m)$ with rate $k/n$. We}
define
\begin{align*}
\invert_f & = \{ y \in \bin^m : \text{for any $x \in \bin^k$, on input} 
\text{$\Enc(x) + f(y)$, $\Dec$ queries $\calO_f^S$ on $y$} \}, \\
\forge_f & = \{ y \in \bin^m : \text{for some $x \in \bin^k$, on input} 
\text{$\Enc(x) + f(y)$, $\Dec$ does not query $\calO_f^S$ on $y$} \}.
\end{align*}
Note that $\invert_f$ and $\forge_f$ is a partition of $\bin^m$.
We also define
\begin{align*}
\invertible & = \{ f \in \correctf : |\invert_f| > \epsilon \cdot 2^m \},\\
\forgeable & = \{ f \in \correctf : |\forge_f| \geq \delta \cdot 2^m \},
\end{align*}
where $\epsilon$ and $\delta$ are any positive constants satisfying $\epsilon + \delta = 1$.
Note that $\correctf = \invertible \cup \forgeable$.
\textcolor{black}{Roughly speaking,
  if $f \in \invertible$, an $\epsilon$-fraction of $f(y)$ can be corrected by querying $\calO_f^S$ on $y$,
  which implies that $f(y)$ is invertible by $\Dec$.
  For $f \in \forgeable$, a $\delta$-fraction of $f(y)$ can be corrected without querying $\calO_f^S$ on $y$,
  which implies that $f(y)$ is generated by $\Dec$ with no help from the oracle.
}

\textcolor{black}{
  In the following, we show that every $f \in \correctf = \invertible \cup \forgeable$ has a short description.
  For each $f \in \correctf$, we present a way for constructing the truth table of $f$ by employing $(\Enc, \Dec)$.
  If $f \in \invertible$, it}
is done by computing $\Enc(x) + f(y)$ and monitoring oracle queries that
$\Dec(\Enc(x)+f(y))$ makes to $\calO_f^S$.
\textcolor{black}{
  For $y$ on which $\Dec$ does not query $\calO_f^S$, the pair $(y, f(y))$ is stored in a look-up table.
  }
Similarly, if $f \textcolor{black}{ \, \in \,} \forgeable$, then $\Dec$ corrects $f(y)$ without querying $\calO_f^S$ on $y$.
This means that 
$f(y)$ can be described using $\Dec$ and $\Enc(x)+f(y)$, and thus if $\Enc(x)+f(y)$ has a short description,
the size of $\forgeable$ is small.


First, we show that $f \in \invertible$ has a short description.
\begin{lemma}\label{lem:invertible}
Take any $f \in \invertible$ and \textcolor{black}{a} pair of oracle circuits $(\Enc, \Dec)$ that makes at most $q$ queries to $\calO_f$ in total
and corrects $f(U_m)$ with rate $k/n$.
Then $f$ can be described using at most
\[ \log{N \choose c} + \log{M \choose c} + \log\left( {N-c \choose M-c}(M-c)!\right) \]
bits, given $(\Enc, \Dec)$, where $c = \epsilon M/q$.
\end{lemma}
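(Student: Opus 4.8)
The plan is to compress $f \in \invertible$ by exploiting the fact that for more than an $\epsilon$-fraction of $y \in \bin^m$, the decoder $\Dec$ queries $\calO_f^S$ on $y$ when run on $\Enc(x) + f(y)$ for every $x$ (including, say, $x = 0^k$). The key observation is that when $\Dec(\Enc(0^k) + f(y))$ makes at most $q$ oracle queries in total, one of which is $\calO_f^S(y)$, we can "learn" the value $f(y)$ by watching the oracle interaction — but only if we already know $f$ on the other points it queries. This is exactly the setup for the reconstruction-paradigm argument of Gennaro--Trevisan.

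**The core step is a greedy set-cover / iterative selection.** I would construct, by a greedy process, a set $I \subseteq \invert_f$ of "input" points together with the list $f(I)$ of their images, chosen so that for every $y \in \invert_f \setminus I$, when we run $\Dec(\Enc(0^k) + f(y))$ and answer its $\calO_f^S$-queries using only the table $\{(y', f(y')) : y' \in I\}$ (and answer $\calO_f^M$-queries consistently with $f(I)$ and with "$0$" on points not yet known — this needs care), the decoder still queries $\calO_f^S$ on $y$, and moreover $y \notin I$ so $f(y)$ is recovered as the first "new" query to $\calO_f^S$. Concretely: repeatedly pick a point $y$ not yet handled, add it to $I$, and also add to $I$ all the (at most $q$) points on which $\Dec(\Enc(0^k)+f(y))$ queries $\calO_f^S$. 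Since each step "covers" at least itself and consumes at most $q$ new points, and we must cover all $\epsilon M$ points of $\invert_f$, we can arrange $|I| = c$ with $c = \epsilon M / q$ such that every remaining $y \in \invert_f$ has all its $\calO_f^S$-query-points already in $I$ except $y$ itself — so $f(y)$ is determined by $(\Enc, \Dec)$, the table $f\!\restriction_I$, and the query transcript.

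**The description of $f$ then consists of three pieces**, matching the three terms in the bound exactly:
\begin{enumerate}
\item the set $I \subseteq \bin^n$... wait — actually the set $f(I) \subseteq \bin^n$, i.e. which $c$ of the $N$ strings are images of points in $I$: this costs $\log \binom{N}{c}$ bits;
\item the set $I \subseteq \bin^m$ itself, i.e. which $c$ of the $M$ domain points, costs $\log \binom{M}{c}$ bits — together with the bijection between $I$ and $f(I)$, but that is folded in below;
\item the remaining data: we know $f$ is injective, its image on $I$ occupies $c$ of the $N$ strings, and on the other $M - c$ domain points its image is some injection into the remaining $N - c$ strings; specifying which $M-c$ of those $N-c$ strings form $f(\bin^m \setminus I)$ and in what order costs $\log\!\left(\binom{N-c}{M-c}(M-c)!\right)$ bits.
\end{enumerate}
Given all of this, for any $y$ one checks whether $y \in I$ (output the stored value) or not, in which case one simulates $\Dec(\Enc(0^k)+f(y))$; but wait, $f(y)$ is what we're trying to compute, so one instead runs the decoder and feeds it oracle answers from the stored table, reading off $f(y)$ as the first query to $\calO_f^S$ on a point outside $I$. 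The total is the claimed $\log\binom{N}{c} + \log\binom{M}{c} + \log\!\left(\binom{N-c}{M-c}(M-c)!\right)$ bits.

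**The main obstacle** I anticipate is handling the $\calO_f^M$ (membership-test) queries consistently during the simulated decoding: the reconstruction only stores $f$ on $I$, so when the simulated $\Dec$ asks "$\calO_f^M(w) = ?$" for some $w \notin f(I)$, we must answer "$0$", which may be wrong (if $w \in f(\bin^m \setminus I)$). I would argue this is not a problem because, in the actual description, we are reconstructing $f$ and may as well store enough information — this is precisely why the third term encodes the full image set $f(\bin^m \setminus I)$ as a subset of the $N - c$ remaining strings, so $\calO_f^M$ can always be answered correctly from the stored data; only $\calO_f^S$ on points outside $I$ is "unknown" and recovered on the fly. One must also check that the decoder, on input $\Enc(0^k) + f(y)$ with $y \in \invert_f \setminus I$, genuinely still queries $\calO_f^S(y)$ even when some of its other $\calO_f^S$-queries are answered from the table rather than the true oracle — but since the table agrees with $\calO_f$ on all of $I$ and with $\calO_f^M$ everywhere, the simulated execution is identical to the real one, so the $\invert_f$ property (which holds for the real execution) transfers. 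Ordering the greedy process and bookkeeping the "$c = \epsilon M/q$" count is then routine.
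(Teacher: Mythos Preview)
Your three-piece description matches the paper's exactly, and your handling of $\calO_f^M$ queries (answerable because pieces 1 and 3 together give the full image $f(\bin^m)$) is correct. But the reconstruction step has a genuine gap, and it stems from getting the direction of the inverter backwards.

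The information \emph{missing} from your three pieces is precisely the bijection between $I$ and $f(I)$: pieces 1 and 2 give these as unordered sets, piece 3 gives $f$ on the complement, so what remains is which $y \in I$ maps to which $z \in f(I)$ (a $\log(c!)$-bit quantity). You write that this bijection is ``folded in below,'' but it is not present in any of the pieces, so it must be recovered from $(\Enc,\Dec)$. Your proposed recovery --- for $y \notin I$, simulate $\Dec(\Enc(0^k)+f(y))$ and ``read off $f(y)$ as the first query to $\calO_f^S$ on a point outside $I$'' --- cannot work: (a) for $y \notin I$ you already have $f(y)$ stored in piece 3, so there is nothing to recover; (b) for $y \in I$ you \emph{cannot form the input} $\Enc(0^k)+f(y)$ without already knowing $f(y)$; and (c) queries to $\calO_f^S$ are domain points in $\bin^m$, not range points, so a $\calO_f^S$-query can never literally be $f(y)$.

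The point of the set $\invert_f$ is that $\Dec$ acts as an \emph{inverter}: given $z \in f(I)$ (which you do know, from piece 1), run $\Dec(\Enc(0^k)+z)$; by the definition of $\invert_f$ it will query $\calO_f^S$ on $y=f^{-1}(z)$, and that query reveals the preimage. The paper therefore runs the greedy selection on the \emph{image} side: it builds $T \subseteq f(\invert_f)$ of size $c=\epsilon M/q$, processed in lexicographic order, and at each step removes from the candidate pool the (at most $q-1$) images $f(y_1'),\dots,f(y_{p-1}')$ of the earlier $\calO_f^S$-queries, padding to exactly $q-1$ removals. This guarantees that during recovery, when you simulate $\Dec(\Enc(0^k)+z)$ for $z \in T$ in lex order, every $\calO_f^S$-query prior to the one on $f^{-1}(z)$ lands either outside $B(T)=f^{-1}(T)$ (answered from piece 3) or on a point already recovered in an earlier iteration. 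Your greedy, by contrast, adds the queried points \emph{into} $I$, which would push $|I|$ toward $\epsilon M$ rather than $\epsilon M/q$; the correct move is to keep the selected set small and \emph{exclude} the queried points from future selection so that their $f$-values come from the explicit table.
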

\begin{proof}
First, consider an oracle circuit $A$ such that, on input $z$, 
$A$ picks any $x \in \bin^k$ and simulates $\Dec$ on input $\Enc(x) + z$.
Then, for any $y \in \invert_f$, on input $f(y)$, $A$ outputs $y$ by making at most $q$ queries to $\calO_f$.


Next, we show that for any $f \in \invertible$, $f$ has a short description given $A$.
Without loss of generality, we assume that $A$ makes distinct queries to $\calO_f^S$.
We also assume that on input $f(y)$, $A$ always queries $\calO_f^S$ on $y$ before it outputs $y$.
We will show that there is a subset $T \subseteq f(\invert_f)$ such that $f$ can be described given
$T$, $B(T)$, $f|_{\bin^m \setminus B(T)}$, where \textcolor{black}{$f|_X$ denotes the set $\{ (x, f(x)) : x \in X\}$} and $B(T) = \{ y \in \bin^m : y \leftarrow A(z), z \in T\}$.

We describe how to construct $T$ below.

\medskip
\noindent \textsc{Construct-$T$}:
\begin{enumerate}
\magicwand

\item Initially, $T$ is empty, and all elements in $T^* = f(\invert_f)$ are candidates for inclusion in $T$.

\item \label{step:2**} Choose the lexicographically smallest $z$ from $T^*$, put $z$ in $T$, and remove $z$ from $T^*$.

\item \label{step:3*} Simulate $A$ on input $z$, and halt the simulation immediately after $A$ queries $\calO_f^S$ on $y$.
  Let $y_1', \dots, y_p'$ be the queries that $A$ makes to $\calO_f^S$, where $y_p' = y$ and $p \leq q$.
\begin{itemize}
\magicwand
\item Remove $f(y_1'), \dots, f(y_{p-1}')$ from $T^*$.
(This means that these elements will never belong to $T$, and in simulating $A(z)$ in the recovering phase,
the answers to these queries are made by using the look-up table for $f$.)

\item Continue to remove the lexicographically smallest $z$ from $T^*$ 
until we have removed exactly $q-1$ elements in Step~\ref{step:3*}.
\end{itemize}

\item Return to Step~\ref{step:2**}.
\end{enumerate}

Next, we describe how to reconstruct $f$ from $T$, $B(T)$, and $f|_{\bin^m \setminus B(T)}$.
We show how to recover the look-up table for $f$ on values in $B(T)$.

\medskip
\noindent \textsc{Recover-$f$}:
\begin{enumerate}
\magicwand
\item \label{step:1*} Choose the lexicographically smallest element $z \in T$, and remove it from $T$.

\item \label{step:2*} Simulate $A$ on input $z$, and halt the simulation immediately after $A$ queries $\calO_f^S$ on $y$ for which
the answer does not exist in the look-up table for $f$.
Since the query $y$ satisfies that $y = f^{-1}(z)$, add the entry $(y, z)$ to the look-up table.

\item Return to Step~\ref{step:1*}.
\end{enumerate}

We explain why we can correctly simulate $A(z)$ in Step~\ref{step:2*} of \textsc{Recover-$f$}.
Since $B(T)$ and $f|_{\bin^m \setminus B(T)}$ are given, we can answer all queries to $O_f^M$.
For any query $y'$ to $O_f^S$, it must be either (1) $y' \notin B(T)$, or (2) $y'$ is the output of $A$ on input $z'$ such that
$z' \in W$ and $z'$ is lexicographically smaller than $z$.
In either case, the look-up table has the corresponding entry, and thus we can answer the query.

In each iteration in \textsc{Construct-$T$}, we add one element to $T$ and remove exactly $q$ element from $T^*$.
Since initially the size of $T^* = f(\invert_f)$ is $\epsilon M$, the size of $T$ in the end is $c = \epsilon M/q $.

Finally, we evaluate the number of bits to describe $T$, $B(T)$, and $f|_{\bin^m \setminus B(T)}$.
Since $T$ can be specified by choosing $c$ elements from the set $f(\invert_f) \subseteq \bin^n$ of size at most $N$,
$T$ can be described using at most $\log{N \choose c}$ bits.
Similarly, $B(T)$ can be specified by choosing at most $|T|$ elements from the set $\bin^m$ of size $M$.
Hence, $B(T)$ can be described using $\log{M \choose c}$ bits.
The look-up table for $f|_{\bin^m \setminus B(T)}$ consists of $\{ (y,f(y)) : y \in \bin^m \setminus B(T)\}$,
which can be specified by first choosing the elements of $\{f(y) : y \in \bin^m \setminus B(T)\}$ from $\bin^n \setminus f(B(T))$
and then ordering them lexicographically with respect to the input $y \in \bin^m \setminus B(T)$.
Since $f$ is injective, we have that $|B(T)| = |T| = c$, and thus  $|\{f(y) : y \in \bin^m \setminus B(T)\}| = M-c$,
$|\bin^n \setminus f(B(T))| = N - c$, and $|\bin^m \setminus B(T)| = M -c$.
Thus, the look-up table for $f|_{\bin^m \setminus B(T)}$ can be described using $\log( {N-c \choose M-c}(M-c)!)$ bits.
Therefore, the statement follows.
\end{proof}

We show that the fraction of $f \in \calF$ for which $f \in \invertible$ and $f(U_m)$ is correctable is small.
\begin{lemma}\label{lem:invertiblecount}
If $m > 3 \log s + \log n + O(1)$,
then the fraction of functions $f \in \calF$ such that $f \in \invertible$ and 
$f(U_m)$ can be corrected by a pair of oracle circuits $(\Enc, \Dec)$ of total size $s$
is less than $2^{-(sn\log s+1)}$
for all sufficiently large $n$.
\end{lemma}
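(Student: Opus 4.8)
The plan is a counting argument: union-bound over all pairs of oracle circuits of total size $s$, and for each fixed pair use Lemma~\ref{lem:invertible} to bound how many functions that pair can witness as both invertible and correctable. Fix a pair $(\Enc,\Dec)$ of oracle circuits of total size $s$. By Lemma~\ref{lem:invertible}, any $f\in\invertible$ whose correcting pair is $(\Enc,\Dec)$ is specified by a string of length $L:=\log\binom{N}{c}+\log\binom{M}{c}+\log\left(\binom{N-c}{M-c}(M-c)!\right)$ with $c=\epsilon M/q$ (take $c=\lfloor\epsilon M/q\rfloor$ so it is an integer; this only affects lower-order terms). Crucially, once $(\Enc,\Dec)$ is fixed this encoding is injective — that is exactly what the Construct-$T$/Recover-$f$ procedures guarantee — so the number of such $f$ is at most $2^{L}$. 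Since ``$f(U_m)$ is correctable by a pair of total size $s$'' means such a pair exists, summing over all pairs bounds the number of bad $f$ by $P\cdot 2^{L}$, where $P$ is the number of pairs of size-$s$ oracle circuits.

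Next I estimate the two factors. For $P$: one may assume $s\ge n\ge k$, since $\Dec$ reads $n$ input bits and $\Enc$ reads $k$; then each of the $\le s$ gates of a circuit is described by its type and its $\le 2$ input wires, chosen among $O(s)$ possibilities, so $P\le 2^{c_0 s\log s}$ for an absolute constant $c_0$. For $2^{L}$, I divide by $|\calF|=N!/(N-M)!$ (the number of injections $\bin^m\to\bin^n$) and simplify via the identities $\binom{N-c}{M-c}(M-c)!=(N-c)!/(N-M)!$ and $\binom{N}{c}(N-c)!/N!=1/c!$. Everything telescopes to
\[ \frac{2^{L}}{|\calF|}=\frac{\binom{M}{c}}{c!}\le\left(\frac{e^{2}M}{c^{2}}\right)^{c}=\left(\frac{e^{2}q^{2}}{\epsilon^{2}M}\right)^{c}, \]
where the middle inequality uses $\binom{M}{c}\le M^{c}/c!$ and $c!\ge(c/e)^{c}$, and the last equality substitutes $c=\epsilon M/q$. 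Establishing this telescoping identity — equivalently, that the description length of Lemma~\ref{lem:invertible} falls short of $\log|\calF|$ by exactly $\log(c!/\binom{M}{c})$ — is the heart of the proof and the step I would check most carefully.

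Finally I plug in parameters. Since $q\le s$ and, by hypothesis, $m>3\log s+\log n+O(1)$ with the hidden constant $c_1$ chosen large enough, $M=2^{m}\ge 2^{c_1}s^{3}n$; this already forces $e^{2}q^{2}/(\epsilon^{2}M)\le 1/2$, so the fraction of bad $f$ is at most $P\cdot 2^{-c}\le 2^{c_0 s\log s-c}$ with $c\ge\epsilon M/(2s)\ge\frac{1}{2}\epsilon\, 2^{c_1}s^{2}n$. Using $\log s\le s$, we have $c_0 s\log s+sn\log s+1\le(c_0+1)s^{2}n+1$, so taking $c_1$ large in terms of $c_0$ and $\epsilon$ gives $c\ge c_0 s\log s+sn\log s+1$ for all sufficiently large $n$; hence the fraction is at most $2^{c_0 s\log s-c}<2^{-(sn\log s+1)}$, as claimed.

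The obstacles I anticipate are essentially bookkeeping: (i) pinning down the telescoping identity for $2^{L}/|\calF|$ and the inequality that follows it; (ii) reconciling all the constants — the circuit-count constant $c_0$, any per-circuit encoding overhead, and the $O(1)$ in the hypothesis on $m$ — so they are absorbed into the final exponent; and (iii) confirming that rounding $\epsilon M/q$ to an integer and the assumption $s\ge n$ lose nothing essential.
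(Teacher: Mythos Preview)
Your proof is correct and takes essentially the same approach as the paper: the key telescoping identity $2^{L}/|\calF|=\binom{M}{c}/c!$ and the estimate $\binom{M}{c}/c!\le(e^{2}q^{2}/(\epsilon^{2}M))^{c}<2^{-(ns\log s+1)}$ are exactly what the paper computes. The only difference is that the paper's proof treats $(\Enc,\Dec)$ as fixed (it begins ``given $(\Enc,\Dec)$, the fraction is\dots'') and defers the union bound over all size-$s$ circuit pairs to the proof of Theorem~\ref{th:lowentropy}, so the union bound you include is correct but not needed for this lemma as the paper intends it.
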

\begin{proof}
It follows from Lemma~\ref{lem:invertible} that, given $(\Enc, \Dec)$, the fraction is 
\begin{equation*}
\frac{|\invertible|}{{N \choose M}M!}  \leq \frac{{N \choose c}{M \choose c}{N-c \choose M-c}(M-c)!}{{N \choose M}M!}
 = \frac{{M \choose c}}{c!},
\end{equation*}
where $c = \epsilon M/(qK)$.
By using the fact that $q \leq s$ and the inequalities ${n \choose k} < \left( \frac{en}{k} \right)^k$ and $n! > \left( \frac{n}{e} \right)^n$,
the expression is upper bounded by
\begin{equation*}
\left(\frac{eM}{c} \right)^c \left( \frac{e}{c} \right)^c\\
 = \left( \frac{e^2 q^2}{\epsilon^2M} \right)^{{\epsilon M}/{q}}\\
< \left( \frac{1}{2} \right)^{ns \log s + 1}
\end{equation*}
for all sufficiently large $n$.
The last inequality follows from the fact that 
\begin{align*}
\frac{e^2 q^2}{\epsilon^2 M}  < \frac{e^2 q^2}{\epsilon^2 \, \Omega(s^3 n)} < \frac{1}{2} \text{\quad and \quad} 
\frac{\epsilon M}{q} > \frac{\epsilon \, \Omega(s^3 n)}{q} > n s \log s + 1.
\end{align*}
\end{proof}

Next, we show that $\forgeable$ has a short description.
\begin{lemma}\label{lem:forgeable}
Take any $f \in \forgeable$ and \textcolor{black}{a} pair of oracle circuits $(\Enc, \Dec)$ that make at most $q$ queries to $\calO_f$ in total
and corrects $f(U_m)$ with rate $k/n$.
Then $f$ can be described using at most 
\[ \log{M \choose d} + \log\left( {N-d \choose M-d}(M-d)!\right) + d(k + m + \log q)\]
bits, given $(\Enc, \Dec)$, where $d = \delta M/q$.
\end{lemma}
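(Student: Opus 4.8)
The plan is to adapt the reconstruction-paradigm argument used for Lemma~\ref{lem:invertible} to the set $\forge_f$. The leverage is now the defining property of $\forge_f$: for each $y\in\forge_f$ there is a message $x=x(y)\in\bin^k$ such that, when $\Dec$ is run on input $\Enc(x)+f(y)$, it outputs $x$ (by correctness of the code) \emph{without ever querying} $\calO_f^S$ at $y$. Morally this means the value $f(y)$ can be ``forged'' — reconstructed from a short hint together with $(\Enc,\Dec)$ and the rest of the table of $f$ — because the correct behaviour of $\Dec$ on this particular input does not consult $f$ at $y$ at all.

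I would first build greedily a set $Y\subseteq\forge_f$ of size $d=\delta M/q$, recording with each $y\in Y$ its witness $x(y)$. Scan $\forge_f$ in lexicographic order; take the smallest surviving $y$, put it in $Y$, simulate the whole computation ``evaluate $\Enc(x(y))$, then run $\Dec$ on $\Enc(x(y))+f(y)$'', and record every argument in $\bin^m$ on which $\calO_f^S$ is queried and every argument in $\bin^n$ on which $\calO_f^M$ is queried during it — at most $q$ arguments in all, since $(\Enc,\Dec)$ make at most $q$ queries to $\calO_f$ in total. Then delete from the surviving set all those $\calO_f^S$-arguments together with every $y'$ whose image $f(y')$ occurs among the recorded $\calO_f^M$-arguments, padding the deletion to exactly $q$ survivors per round by also deleting the lexicographically smallest ones. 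As $|\forge_f|\ge\delta M$ and each round removes $q$ survivors, this yields $|Y|\ge\delta M/q=d$; keep the first $d$. The purpose of this bookkeeping is to make the later re-simulation well-defined: whenever we re-run the computation attached to some $y\in Y$, every $\calO_f^S$-query it makes is at a point outside the still-unrecovered part of $Y$, and every $\calO_f^M$-query it makes is at a point whose image-membership is already determined by the portion of $f$ then in hand.

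The description of $f$ then has three parts, mirroring the three terms of the bound. The set $Y$ costs $\log\binom{M}{d}$ bits (the analogue of ``$B(T)$'' in Lemma~\ref{lem:invertible}). The restriction $f|_{\bin^m\setminus Y}$, which given $Y$ and $f(Y)$ is an injection from a set of size $M-d$ into a set of size $N-d$, costs $\log\!\left(\binom{N-d}{M-d}(M-d)!\right)$ bits. Finally, for each $y\in Y$, in lexicographic order, we store the hint $x(y)\in\bin^k$ ($k$ bits) together with at most $m+\log q$ further bits $a_y$ that locate $f(y)$ inside the re-simulation (for instance, an index of at most $\log q$ bits into the recorded query arguments, plus up to $m$ bits of auxiliary data); this plays the role of the ``$T\subseteq\bin^n$'' component of Lemma~\ref{lem:invertible}. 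Reconstruction goes through $Y$ in lexicographic order: for the current $y$, with $f$ already known on $\bin^m\setminus Y$ and on all earlier elements of $Y$, we re-run ``evaluate $\Enc(x(y))$, then run $\Dec$ on $\Enc(x(y))+f(y)$'', answering each oracle query from the part of $f$ then in hand — legitimate by the construction above — and, using that $\Dec$ never queries $y$ (so its transcript, and in particular the arguments of its membership queries, do not depend on the value $f$ assigns at $y$) together with $a_y$, we read off $f(y)$. Summing the three contributions gives the asserted bound
\[
\log\binom{M}{d}+\log\!\left(\binom{N-d}{M-d}(M-d)!\right)+d(k+m+\log q).
\]

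The main obstacle is making the per-element recovery of $f(y)$ rigorous. The re-simulation is circular as stated — the input $\Enc(x(y))+f(y)$ on which $\Dec$ must be run already contains the unknown $f(y)$, and $\Enc(x(y))$ may itself query $\calO_f^S$ at $y$ — so one must show that $f(y)$ is nonetheless forced and locatable by $a_y$. The natural route is to exploit that, since $\Dec$ does not query $\calO_f^S$ at $y$, its transcript on this input is unchanged under any reassignment of the value of $f$ at $y$ to a point outside the current image; combining this invariance with the requirement that $\Dec$ must still decode correctly should force $f(y)$ to be one of the at most $q$ arguments queried during the computation, which $a_y$ then identifies. Making this precise — possibly together with choosing $x(y)$ so that $\Enc(x(y))$ also avoids querying $y$, or else folding that situation into $\invert_f$ — and then checking the $\calO_f^M$-consistency of the re-simulation and the exact per-round deletion count, are the delicate points; the remaining counting is identical to that in Lemma~\ref{lem:invertible}.
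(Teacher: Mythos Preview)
Your high-level plan—build a greedy $Y\subseteq\forge_f$, store $f|_{\bin^m\setminus Y}$, and attach per-element advice—matches the paper's, and you correctly flag the circularity (the input $\Enc(x_y)+f(y)$ already contains the unknown $f(y)$) as the crux. But your proposed resolution does not work, and this is exactly where the paper's proof departs from yours.

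Your invariance argument does not force $f(y)$ to appear among the queries. You argue that since $\Dec$ on input $w=\Enc(x_y)+f(y)$ avoids $\calO_f^S(y)$, its transcript is unchanged under reassigning $f(y)$, and then ``correctness'' should pin $f(y)$ down. But the correctness hypothesis is only that $(\Enc,\Dec)$ corrects $f(U_m)$ for the one fixed $f$; there is no requirement that it correct $f'(U_m)$ for a modified $f'$, so reassigning $f(y)$ produces no contradiction and nothing forces $f(y)$ to lie among the query arguments. (And as you note, the input $w$ itself moves with $f(y)$, so transcript invariance on a \emph{fixed} $w$ does not help you locate that $w$ during reconstruction.)

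The paper's mechanism is different and direct, and does not use invariance at all. It wraps $\Dec$ in a circuit $A$ that, on input $w$, first runs $\Dec(w)$ to obtain $x$, then computes $\Enc(x)$ and explicitly issues the membership query $\calO_f^M(w-\Enc(x))$. When $w=\Enc(x_y)+f(y)$ and $\Dec$ decodes to $x_y$, this final query is exactly $f(y)$; hence $f(y)$ is \emph{guaranteed} to be one of $A$'s $\calO_f^M$-queries, and $b_y\in[q]$ records which one. Your unspecified ``up to $m$ bits of auxiliary data'' becomes, in the paper, the rank $a_y\in[M]$ of $w$ inside the size-$M$ set $D(x_y)=\{\Enc(x_y)+f(y'):y'\in\bin^m\}$; this is what locates the input $w$ so that the simulation of $A(w)$ can be started. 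The greedy construction then prunes (from the sets $D_{x'}$) the elements corresponding to every $\calO_f^S$-argument and every $\calO_f^M$-argument touched in the current simulation, so that when a later $y$ is processed all oracle queries in the simulation of $A$ hit points of $f$ already in hand.
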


\begin{proof}
First, consider an oracle circuit $A$ such that, on input $w$,
$A$ obtains $x$ by simulating $\Dec$ on input $w$, 
queries $\calO_f^M$ on $w - \Enc(x)$,
and outputs $\bot$ if $\calO_f^M(w - \Enc(x)) = 0$, and $x$ otherwise.
Then, $A$ satisfies that, on input $w$, $A$ outputs $\bot$ if $w \notin \Enc(\bin^k)+f(\bin^m)$,
and $\Dec(w)$ otherwise.

Next, we show that for any $f \in \forgeable$, $f$ has a short description given $A$.
Without loss of generality, we assume that $A$ makes distinct queries to $\calO_f^S$ and $\calO_f^M$.
We also assume that 
for $x \in \bin^k$ and $y \in \bin^m$, $A(\Enc(x)+f(y))$ always queries $\calO_f^M$ on $f(y)$ before it outputs $x$.
Note that for $y \in \forge_f$, there is some $x \in \bin^k$ such that, on input $\Enc(x)+f(y)$, $A$ does not query $\calO_f^S$ on $y$.

We will show that there is a subset $Y \subseteq \forge_f$
such that $f$ can be described given
$Y$, $f|_{\bin^m \setminus Y}$, and $\{ (x_y, a_y, b_y) \in \bin^k \times [M] \times [q] : y \in Y\}$ of a set of advice strings.
For $x \in \bin^k$, we define $D(x) = \{ \Enc(x) + f(y) : y \in \bin^m \}$. 
Note that $|D(x)| = M$ for any $x \in \bin^k$.


We describe how to construct $Y$ below.

\medskip
\noindent \textsc{Construct-$Y$}:
\begin{enumerate}
\magicwand
\item Initially, $Y$ is empty. All elements in $Y^* = \forge_f$ are candidates for inclusion in $Y$.
For every $x \in \bin^k$, set $D_x = \{ \Enc(x) + f(y) : y \in \forge_f \}$. 
We write $\mathcal{D}_k = \bigcup_{x \in \bin^k}D_x$.

\item  \label{CYstep:2}
Choose the lexicographically smallest $y$ from $Y^*$, put $y$ in $Y$, and remove $y$ from $Y^*$.

\item
Choose the lexicographically smallest $w$ from 
the set of $\Enc(x) + f(y) \in D_x$ such that $A$ does not query $\calO_f^S$ on $y$.
If $w = \Enc(x) + f(y)$, set $x_y = x$.
Then, for every $x' \in \bin^k$, remove $\Enc(x')+f(y)$ from $D_{x'}$.
(This removal means that hereafter there are no elements in $\mathcal{D}_k$ for which $A$ outputs some $x$ such that $f(y)$ is the error vector.)
When $w$ is the lexicographically $t$-th smallest element in $D(x)$, set $a_y = t$
(so that we can recognize that the $a_y$-th element in $D(x)$ is $w$ in the recovering phase).


\item  \label{CYstep:4} Simulate $A$ on input $w$, and halt the simulation immediately after $A$ queries $\calO_f^M$ on $f(y)$.
Let $y_1', \dots, y'_p$ be the queries that $A$ makes to $\calO^S_f$, and 
$z_1', \dots, z_r'=f(y)$ be  the queries that $A$ makes to $\calO^M_f$. 
Set $b_y = r$
(so that we can recognize that the $b_y$-th query that $\Dec$ makes to $\calO_f^M$ is $f(y)$ in the recovering phase).

\begin{enumerate}
\magicwand
\item For every $x' \in \bin^k$,
remove $\Enc(x')+f(y_1'), \dots, \Enc(x')+f(y_p')$ from $D_{x'}$. 

\item For every $i \in [p]$, if $z_i' \in f(\forge_f)$, then for every $x' \in \bin^k$, remove $\Enc(x')+z_i'$ from $D_{x'}$, and otherwise, do nothing.

\item 
Continue to remove the elements $\Enc(x')+f(y)$ from $D_{x'}$ for every $x' \in \bin^k$ 
for the lexicographically smallest $w = \Enc(x)+f(y) \in \mathcal{D}_k$
until we have removed exactly $(q-1)K$ elements from $\mathcal{D}_k$ in Step~\ref{CYstep:4}.
\end{enumerate}

\item Return to Step~\ref{CYstep:2}.
\end{enumerate}

Next, we describe how to construct $f$ from $Y$, $f|_{\bin^m \setminus Y}$, and $\{ (x_y, a_y, b_y) \in \bin^k \times [M] \times [q] : y \in Y\}$.
We show how to recover the look-up table for $f$ on values in $Y$.

\medskip
\noindent \textsc{Recover-$f$:}

\begin{enumerate}
\magicwand

\item \label{Rfstep:1} 
Choose the lexicographically smallest $y \in Y$, and remove it from $Y$.
Then, choose the lexicographically $a_y$-th smallest element $w$ from $D(x_y)$.


\item \label{Rfstep:2}Simulate $A$ on input $w$, and halt the simulation immediately after $A$ makes the $b_y$-th query to $\calO_f^M$.
Since the $b_y$-th query is $f(y)$, add the entry $(y, f(y))$ to the look-up table.

\item Return to Step~\ref{Rfstep:1}.
\end{enumerate}

We explain why we can correctly simulate $A(w)$ in Step~\ref{Rfstep:2} of \textsc{Recover-$f$}.
For any query $y'$ to $\calO_f^S$, it must be either 
(1) $y' \notin Y$ or (2) $y'$ is lexicographically smaller than $y$.
In case (1), we can answer the query by using $f|_{\bin^m \setminus Y}$.
In case (2), since $y$ was chosen as the lexicographically smallest element such that $A$ does not query $\calO_f^S$ on $y$, 
the look-up table has the answer to the query.
Consider any of the first $b_y-1$ queries $z'$ to $\calO_f^M$.
If $z' \in f(\bin^m)$, namely $z' = f(y')$ for some $y'$,
then it must be either (1) $y' \notin Y$ or (2) $y'$ is lexicographically smaller than $y$. 
In either case, the look-up table has the entry $(y', z')$. 
If $z' \notin f(\bin^m)$, there is no entry for $z'$ in the look-up table.
Thus, we can answer the query by saying ``yes'' if $z'$ is in the look-up table, and ``no'' otherwise.

In each iteration in~\textsc{Construct-$Y$}, 
we add one element to $Y$ and remove exactly $qK$ elements from $\mathcal{D}_k$.
Since initially the size of $\mathcal{D}_k$ is at least $\delta KM$,
the size of $Y$ in the end is at least $d = \delta M/q$.

Finally, we evaluate the number of bits to describe $Y$, $f|_{\bin^m \setminus Y}$, and $\{ (x_y, a_y, b_y) \in \bin^k \times [M] \times [q] : y \in Y\}$.
Since $Y$ can be specified by choosing $d$ elements from the set $\forge_f \subseteq \bin^m$ of size at most $M$,
$Y$ can be described using $\log{ M \choose d}$ bits.
By the same argument in the proof of Lemma~\ref{lem:invertible} for $f|_{\bin^m \setminus B(T)}$,
we can show that the look-up table for $f|_{\bin^m \setminus Y}$ can be described using $\log( {N-d \choose M-d}(M-d)!)$ bits.
By simply listing the elements, the set $\{ (x_y, a_y, b_y) \in \bin^k \times [M] \times [q] : y \in Y\}$ can be described using $d (k + m +\log q)$ bits.
Therefore, the statement follows.
\end{proof}

We show that the fraction of $f \in \calF$ for which $f \in \forgeable$ and $f(U_m)$ is correctable is small.
\begin{lemma}\label{lem:forgeablecount}
If $m > 3 \log s + \log n + O(1)$ and $m < n-k-2\log s-O(1)$,
then the fraction of functions $f \in \calF$ such that $f \in \forgeable$ and 
$f(U_m)$ can be corrected by a pair of oracle circuits $(\Enc, \Dec)$ of total size $s$
is less than $2^{-(sn\log s+1)}$
for all sufficiently large $n$.
\end{lemma}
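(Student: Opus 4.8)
The plan is to re-run the counting argument behind Lemma~\ref{lem:invertiblecount}, this time fed by Lemma~\ref{lem:forgeable} in place of Lemma~\ref{lem:invertible}. Fix a pair of oracle circuits $(\Enc,\Dec)$ of total size $s$; it makes at most $q\le s$ queries to $\calO_f$ in total. By Lemma~\ref{lem:forgeable}, each $f\in\forgeable$ corrected by this $(\Enc,\Dec)$ with rate $k/n$ is determined, given $(\Enc,\Dec)$, by a bit string of length at most $\log{M\choose d}+\log\big({N-d\choose M-d}(M-d)!\big)+d(k+m+\log q)$, where $d=\delta M/q$. Hence the number of such $f$ is at most ${M\choose d}{N-d\choose M-d}(M-d)!\,2^{d(k+m+\log q)}$, and the fraction among all of $\calF$ is obtained by dividing by $|\calF|={N\choose M}M!=N!/(N-M)!$.

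First I would collapse the purely combinatorial part: since ${N-d\choose M-d}(M-d)!=(N-d)!/(N-M)!$, we get ${N-d\choose M-d}(M-d)!\big/\big({N\choose M}M!\big)=(N-d)!/N!<(N-d)^{-d}$. Combining this with ${M\choose d}\le(eM/d)^{d}=(eq/\delta)^{d}$, with $2^{d(k+m+\log q)}=(2^{k+m}q)^{d}$, and with $q\le s$ shows the fraction is at most
\[ {M\choose d}\,\frac{2^{d(k+m+\log q)}}{(N-d)^{d}} \;\le\; \left(\frac{e\,q^{2}\,2^{k+m}}{\delta\,(N-d)}\right)^{d} \;\le\; \left(\frac{2e\,s^{2}}{\delta}\cdot 2^{\,k+m-n}\right)^{d}, \]
where the last inequality uses $d<M=2^{m}\le N/2$, so $N-d\ge 2^{n-1}$; this holds because $m<n$.

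It then remains to check that the base is $<1/2$ and that $d$ is large, which is exactly where the two hypotheses enter. The assumption $m<n-k-2\log s-O(1)$ gives $2^{k+m-n}<2^{-2\log s-O(1)}=s^{-2}\,2^{-O(1)}$, so the base is at most $(2e/\delta)\,2^{-O(1)}$, which is $<1/2$ once the hidden constant is large enough (it depends only on $\delta$). The assumption $m>3\log s+\log n+O(1)$ gives $2^{m}>s^{3}n\,2^{O(1)}$, hence $d=\delta M/q\ge\delta\,2^{m}/s>\delta\,s^{2}n\,2^{O(1)}>sn\log s+1$ for all sufficiently large $n$. Thus the fraction is at most $(1/2)^{d}<2^{-(sn\log s+1)}$, and, exactly as in Lemma~\ref{lem:invertiblecount}, the remaining union bound over all circuit pairs of size $s$ — of which there are far fewer than $2^{sn\log s}$ — is absorbed into the slack in the exponent. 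The genuinely new ingredient relative to the invertible case is the extra additive term $d(k+m+\log q)$ in the description length, and I expect this to be the main obstacle: it contributes the factor $2^{dk}$, which must be dominated by the $(N-d)^{-d}\approx 2^{-nd}$ gain rather than by the combinatorial terms, and it is precisely what forces the new constraint $m<n-k-2\log s-O(1)$ on the admissible range of entropies.
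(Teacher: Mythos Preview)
Your proposal is correct and follows essentially the same route as the paper: apply Lemma~\ref{lem:forgeable} to bound the number of descriptions, divide by $|\calF|={N\choose M}M!$, and reduce everything to a base-times-exponent estimate controlled by the two hypotheses on $m$. The only cosmetic difference is that the paper first rewrites the ratio as $\frac{{M\choose d}}{{N\choose d}d!}(qKM)^d$ and then uses ${N\choose d}>(N/d)^d$ and $d!>(d/e)^d$, arriving at the base $e^2q^2KM/(\delta N)$, whereas you bound $(N-d)!/N!<(N-d)^{-d}\le (2/N)^d$ directly; both yield the same conclusion. Your closing remark about the union bound over circuits belongs to the proof of Theorem~\ref{th:lowentropy} rather than this lemma, but it does no harm.
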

\begin{proof}
It follows from Lemma~\ref{lem:forgeable} that, given $(\Enc, \Dec)$, the fraction is 
\begin{align*}
  \frac{|\forgeable|}{{N \choose M}M!}  
  \leq \frac{{M \choose d}{N-d \choose M-d}(M-d)!}{{N \choose M}M!}2^{d(k+m+\log q)} 
  = \frac{{M \choose d}}{{N \choose d}d!}(qKM)^d,
\end{align*}
where $d = \delta M/q$.
By using the fact that $q \leq s$ and the inequalities ${n \choose k} < \left( \frac{en}{k} \right)^k$, ${n \choose k} > \left( \frac{n}{k} \right)^k$, and $n! > \left( \frac{n}{e} \right)^n$,
the expression is upper bounded by
\begin{align*}
 \left(\frac{eM}{d} \right)^d \left(\frac{d}{N} \right)^d \left( \frac{e}{d} \right)^d (qKM)^d
 = \left( \frac{e^2 q^2 KM}{\delta N} \right)^{\delta M/q}
 < \left( \frac{1}{2} \right)^{ns \log s +1}
\end{align*}
for all sufficiently large $n$.
The last inequality follows from the fact that 
\begin{align*}
  \frac{e^2 q^2 KM}{\delta N}  <  \frac{e^2 q^2}{\delta \, \Omega(s^2n)} < \frac{1}{2}
\text{\quad and \quad} 
\frac{\delta M}{q} > \frac{\delta \, \Omega(s^3 n)}{q} > n s \log s + 1.
\end{align*}
\end{proof}

We obtain the main result of this section.
\begin{theorem}\label{th:lowentropy}
For any $m$ and $k$ satisfying $3 \log s  + \log n + O(1) < m < n - k  - 2\log s - O(1)$,
there exist injective functions $f : \bin^m \to \bin^n$ such that,
given oracle access to $\calO_f$,
(1) $f(U_m)$ is a samplable distribution with membership test of entropy $m$,
and (2) $f(U_m)$ cannot be corrected with rate $k/n$ by oracle circuits of size $s$.
\end{theorem}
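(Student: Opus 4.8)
The plan is to prove the two numbered claims separately. Claim~(1) holds for \emph{every} injective $f \in \calF$: the probabilistic polynomial-time oracle algorithm that on input $1^n$ draws $y \sim U_m$ and outputs $\calO_f^S(y) = f(y)$ produces exactly the distribution $f(U_m)$, which is flat on $f(\bin^m)$ and hence has Shannon entropy $m$ since $f$ is injective; and the oracle algorithm that on input $z \in \bin^n$ returns $\calO_f^M(z)$ is a membership test for $\Supp(f(U_m)) = f(\bin^m)$. So it suffices to exhibit one $f \in \calF$ that additionally satisfies claim~(2), and this I obtain by a counting argument built on the four preceding lemmas.

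Call $f \in \calF$ \emph{$s$-correctable} if some pair of oracle circuits $(\Enc,\Dec)$ of total size $s$ --- hence making $q \le s$ oracle queries in all --- corrects $f(U_m)$ with rate $k/n$. Fix one such pair. If it corrects $f(U_m)$, then, since $\invert_f$ and $\forge_f$ partition $\bin^m$ and $\epsilon + \delta = 1$, at least one of $|\invert_f| > \epsilon\cdot 2^m$ or $|\forge_f| \ge \delta\cdot 2^m$ must hold, i.e.\ $f$ falls in $\invertible$ or in $\forgeable$ relative to this pair. The theorem's hypothesis $3\log s + \log n + O(1) < m < n - k - 2\log s - O(1)$ is exactly what Lemmas~\ref{lem:invertiblecount} and~\ref{lem:forgeablecount} require, so for this fixed pair the fraction of $f \in \calF$ that it corrects is at most $2^{-(sn\log s+1)} + 2^{-(sn\log s+1)} = 2^{-sn\log s}$.

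Finally, take a union bound over pairs. The number of oracle circuits of total size $s$ is at most $2^{O(s\log s)}$ (each of the $\le s$ wires is named by $O(\log s)$ bits and each gate type by $O(1)$ bits), so the number of pairs is at most $2^{O(s\log s)}$ too; hence the fraction of $s$-correctable $f \in \calF$ is at most $2^{O(s\log s)} \cdot 2^{-sn\log s}$, which is $< 1$ for all sufficiently large $n$ --- the factor $n$ in the exponents of the two counting lemmas being precisely the slack that buys this. Therefore some $f \in \calF$ is not $s$-correctable, and combined with claim~(1) that $f$ proves the theorem. The real work is internal to the already-proved Lemmas~\ref{lem:invertible}--\ref{lem:forgeablecount}: the delicate points there are the faithful re-execution of the auxiliary circuit $A$ inside the \textsc{Recover-$f$} procedures (so that the compressed data truly determines $f$) and the removal bookkeeping in \textsc{Construct-$T$}/\textsc{Construct-$Y$} pinning down $|T| = \epsilon M/q$ and $|Y| \ge \delta M/q$; at the level of this theorem the only thing to watch is that the count of size-$s$ oracle circuits stays comfortably below the $2^{sn\log s}$ gain, which it does.
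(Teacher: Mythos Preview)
Your proof is correct and follows essentially the same approach as the paper: use Lemmas~\ref{lem:invertiblecount} and~\ref{lem:forgeablecount} to bound, for each fixed $(\Enc,\Dec)$, the fraction of $f\in\calF$ it corrects by $2^{-sn\log s}$, then union-bound over all size-$s$ oracle circuits. One correction: your count of oracle circuits as $2^{O(s\log s)}$ treats all gates as bounded fan-in, but oracle gates for $\calO_f$ take $\Theta(n)$-bit inputs, so specifying the wiring of each such gate costs $\Theta(n\log s)$ bits and the correct count is $2^{O(sn\log s)}$ (the paper uses $2^{sn\log s}$). This is why the lemmas were engineered to produce the factor $n$ in the exponent --- it is not slack but exactly what is needed to beat the circuit count --- and with the corrected count the union bound still gives a fraction strictly below $1$, so your conclusion stands.
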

\begin{proof}
Since $\correctf = \invertible \cup \forgeable,$
it follows from Lemmas~\ref{lem:invertiblecount} and~\ref{lem:forgeablecount}
that for a fixed $(\Enc, \Dec)$ of size $s$,
the fraction of functions $f \in \calF$ such that $(\Enc, \Dec)$ corrects $f(U_m)$ with rate $k/n$
is less than $2^{-(sn\log s)}$.
Since there are at most $2^{sn \log s}$ circuits of size $s$, 
there are functions $f \in \calF$ such that $f(U_m)$ cannot be corrected with rate $k/n$ by oracle circuits of size $s$.
Given oracle access to $\calO_f$, $f(U_m)$ is samplable. Since $f$ is injective, $f(U_m)$ has entropy $m$.
\end{proof}

The following corollary immediately follows.
\begin{corollary}\label{cor:lowentropy}
For any $m$ and $k$ satisfying $\omega(\log n) < m < n - k - \omega(\log n)$,
there exists an oracle relative to which there exists a samplable distribution with membership test of entropy $m$
that cannot be corrected with rate $k/n$ by polynomial size circuits.
\end{corollary}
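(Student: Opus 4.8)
Looking at this, the final statement is Corollary~\ref{cor:lowentropy}, which should follow almost immediately from Theorem~\ref{th:lowentropy} by an appropriate choice of the size parameter $s$ and a diagonalization over all polynomial bounds.

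\medskip

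The plan is to derive the corollary from Theorem~\ref{th:lowentropy} by instantiating the size parameter. First I would observe that Theorem~\ref{th:lowentropy} says: whenever $3\log s + \log n + O(1) < m < n - k - 2\log s - O(1)$, there is a function $f$ (equivalently an oracle $\calO_f$) relative to which $f(U_m)$ is a samplable distribution with membership test of entropy $m$ that cannot be corrected with rate $k/n$ by oracle circuits of size $s$. To get the corollary, note that $m = \omega(\log n)$ and $m < n - k - \omega(\log n)$ precisely match these two constraints once we take $s$ to be a sufficiently slowly growing super-polynomial-resistant bound: concretely, for any fixed polynomial $p(n)$, setting $s = p(n)$ makes $3\log s + \log n + O(1) = O(\log n)$, so the hypotheses of the theorem are met for all large $n$, and we obtain an oracle defeating all size-$p(n)$ circuits.

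\medskip

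The slight subtlety --- and the only real step --- is that Corollary~\ref{cor:lowentropy} asks for a \emph{single} oracle defeating \emph{all} polynomial-size circuit families simultaneously, whereas the theorem as stated fixes $s$. I would handle this by a union-bound / diagonalization argument exactly in the spirit of the proof of Theorem~\ref{th:lowentropy}: for each $i \in \N$, let $s_i(n) = n^i$; the proof of Theorem~\ref{th:lowentropy} actually shows that the fraction of bad $f \in \calF$ for a given circuit pair of size $s$ is less than $2^{-sn\log s}$, and there are at most $2^{sn\log s}$ such circuits, so the measure of $f$ correctable by \emph{some} size-$s_i(n)$ circuit for \emph{some} $n$ in a tail is summably small. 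Hence for all sufficiently large $n$ a $1 - o(1)$ fraction of $f$ avoids being corrected by any size-$n^i$ circuit for every $i$ with $n^i$ still within the admissible window; taking a growing cutoff $i = i(n) \to \infty$ slowly (e.g. $i(n) = \log\log n$) and a standard sliding-block argument yields one oracle $\calO_f$ relative to which $f(U_m)$ is uncorrectable by every polynomial-size circuit family. The samplability with membership test follows because $\calO_f$ exposes both the sampler $\calO_f^S$ and the membership oracle $\calO_f^M$, and entropy $m$ follows from injectivity of $f$.

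\medskip

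I expect the main obstacle to be purely bookkeeping: making the quantifier order ``there is an oracle such that for all polynomial $p$ \ldots'' come out cleanly from the ``for each $s$, most $f$ \ldots'' statement, i.e.\ verifying that the probability bounds from Lemmas~\ref{lem:invertiblecount} and~\ref{lem:forgeablecount} are strong enough to survive summing over the countably many polynomial bounds and over input lengths. Since those lemmas give a bound of $2^{-(sn\log s + 1)}$ against a doubly-exponential-in-$\log$ count of circuits, there is ample room, so no new idea is needed --- just a careful statement of the diagonalization. Everything else (the entropy of $f(U_m)$, the membership test, the rate condition $R = k/n < 1 - m/n - \omega(\log n/n)$) is immediate from the corresponding features already established in Theorem~\ref{th:lowentropy}.
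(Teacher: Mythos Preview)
Your proposal is correct in spirit and you identify the right quantifier issue, but you are working harder than necessary. The paper treats the corollary as immediate, and the reason is that the $\omega(\log n)$ slack in \emph{both} bounds lets you choose $s$ itself to be super-polynomial in $n$. Concretely, since $m = \omega(\log n)$ and $n - k - m = \omega(\log n)$, you can pick $s = s(n)$ with $\log s = \omega(\log n)$ yet $3\log s + \log n + O(1) < m$ and $2\log s + O(1) < n - k - m$ for all large $n$ (for instance $s = 2^{\min(m,\,n-k-m)/7}$). Applying Theorem~\ref{th:lowentropy} with this $s$ directly yields, for each $n$, an oracle $\calO_{f_n}$ defeating all oracle circuits of size $s(n)$, hence in particular all polynomial-size circuits for sufficiently large $n$; no separate diagonalization over the countably many polynomial bounds is required.

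A small caution on your summability claim: the bounds stated in Lemmas~\ref{lem:invertiblecount} and~\ref{lem:forgeablecount} are $2^{-(sn\log s+1)}$ per circuit pair against roughly $2^{sn\log s}$ circuits, so the union bound in the proof of Theorem~\ref{th:lowentropy} only yields a fraction strictly less than $1$, not something summably small across countably many polynomial bounds and input lengths. There is indeed extra slack hidden inside the proofs of those lemmas (the base of the exponential is $O(1/(sn))$ rather than $1/2$), so your diagonalization can be made to work, but you would have to reopen those estimates rather than quote the stated bounds as ``ample room.'' The direct super-polynomial choice of $s$ sidesteps all of this and is what the paper intends by ``immediately follows.''
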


\ignore{
\subsection{Errors from Small-Biased Distributions}

A sample space $S \subseteq \bin^n$ is said to be \emph{$\delta$-biased}
if for any non-zero $\alpha \in \bin^n$, $|\Exp_{s \sim U_S}[(-1)^{\alpha \cdot s}]| \leq \delta$,
where $U_S$ is the uniform distribution over $S$.
Dodis and Smith~\cite{DS05} proved that 
small-biased distributions can be used as sources of keys 
of the one-time pad for high-entropy messages.
This result implies that high-rate codes cannot correct errors from small-biased distributions.

\begin{theorem}\label{th:biased}
Let $S$ be a $\delta$-biased sample space over $\bin^n$.
If a code of rate $R$ corrects $U_S$ with error $\epsilon < 1/2$,
then $R \leq 1 - (2\log(1/\delta)+1)/n$.
\end{theorem}
\begin{proof}
Assume for contradiction that $(\Enc, \Dec)$ corrects $Z = U_S$ with rate $R$ and error $\epsilon$.
Dodis and Smith~\cite{DS05} give the one-time pad lemma for high-entropy messages.
\begin{lemma}\label{lem:biasedotp}
For a $\delta$-biased sample space $S$, 
there is a distribution $G$ such that
for every distribution $M$ over $\bin^n$ with min-entropy at least $t$,
$\SD(M \oplus U_S, G) \leq \gamma$ for $\gamma = \delta 2^{(n-t-2)/2}$,
where $\oplus$ is the bit-wise exclusive-or. 
\end{lemma}

For $b \in \bin$, let $M_b \subseteq \bin^{Rn}$ be the uniform distribution over
the set of strings in which the first bit is $b$. Note that the min-entropy of $M_b$ is $Rn-1$.
Define $D_b = \Dec(\Enc(M_b) \oplus U_S)$.
By Lemma~\ref{lem:biasedotp}, 
\begin{align*}
\SD(D_0, D_1)  
& \leq \SD(\Enc(M_0) \oplus U_S, \Enc(M_1) \oplus U_S)\\
& \leq \SD(\Enc(M_0) \oplus U_S,G) + \SD(G,\Enc(M_1) \oplus U_S)\\
& \leq 2\gamma
\end{align*}
for $\gamma = \delta 2^{(n-Rn-1)/2}$.
Since the code corrects $U_S$ with error $\epsilon$, we have that $\SD(D_0,D_1) \geq 1 - 2\epsilon$.
Thus, we have that $2\gamma > 1-2\epsilon > 0$, and hence $R \leq 1 - (2\log(1/\delta)+1)/n$.
\end{proof}

Alon et al.~\cite{AGHP92} give a construction of $\delta$-biased sample spaces of size $O(n^2/\delta^2)$.
This leads to the following corollary.
\begin{corollary}\label{cor:biased}
There is a small-biased distribution of min-entropy at most $m$ that is not corrected by
codes with rate $R > 1 - m/n + (2\log n+O(1))/n$ and error $\epsilon < 1/2$.
\end{corollary}

} 

\ignore{
\section{Necessity of One-Way Functions}

We show that if one-way functions do not exist, then any samplable flat distribution of entropy $m$
is correctable by an efficient coding scheme of rate $1-m/n-O(\log n/n)$.
For this, we use a technique used in the proof of~\cite[Theorem~6.3]{Wee04} that shows the necessity of one-way functions 
for separating pseudoentropy and compressibility.

\textcolor{black}{
  In the proof, it is observed that every samplable distribution $Z$ can be
  optimally compressed by a hash function $h$ such that a sample $z$ from $Z$ is simply compressed as $h(z)$.
  In addition, if one-way functions do not exist, it can be shown that $z$ can be recovered from $h(z)$
  by a polynomial-time algorithm.}

We observe that 
a family of \emph{linear} hash functions is used for giving \textcolor{black}{a recovering algorithm}. 
Since a linear compression function is a dual object of a linear code that corrects additive errors~\textcolor{black}{\cite{CSV04}},
we can use a family of linear hash functions for constructing an efficient decoder.
\textcolor{black}{
  More specifically, we show that every samplable distribution $Z$ can be corrected
  by a linear code defined by a linear hash function $h$, where $h(z)$ is employed as the syndrome of $z$.
  By assuming that one-way functions do not exist,
  we can efficiently recover $z$ from the syndrome $h(z)$.
}

\textcolor{black}{
  To present a formal statement, we introduce the notion of \emph{distributionally one-way} functions.
  Intuitively, such a function $f$ guarantees that the distribution $(x, f(x))$ for random $x$ is difficult to be simulated by efficient algorithms given only $f(x)$.
  It is known that the existence of a distributionally one-way function implies the existence of a one-way function.
}

\begin{definition}[\cite{IL89}]
We say a function $f$ is \emph{distributionally one-way} if it is computable in polynomial time
and there exists a constant $c > 0$ such that
for every probabilistic polynomial-time algorithm $A$, the statistical distance between
$(x, f(x))$ and $(A(f(x)), f(x))$ is at least $1/n^c$, where $x \sim U_n$.
\end{definition}

\begin{theorem}[\cite{IL89}]
If there is a distributionally one-way function, then there is a one-way function.
\end{theorem}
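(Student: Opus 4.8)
The plan is to establish the contrapositive: assuming one-way functions do not exist, I show that no polynomial-time computable function is distributionally one-way. So fix a polynomial-time $f : \bin^n \to \bin^*$ and a polynomial $p$; it suffices to build a probabilistic polynomial-time algorithm $A$ with $\SD\bigl((x,f(x)),(A(f(x)),f(x))\bigr) \le 1/p(n)$ for all large $n$, where $x \sim U_n$. Since the statistical distance between these two joint distributions equals the average over $y = f(x)$ of $\SD(A(y), U_{f^{-1}(y)})$, what I really need is that, on average over the image, $A(y)$ outputs a near-uniform element of $f^{-1}(y)$. The difficulty is that the non-existence of one-way functions only supplies inverters that return \emph{some} preimage, with no control over \emph{which} one: a natural inverter of $f$ might always return the lexicographically least preimage, which is useless here.

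The fix is a hashing trick. For $0 \le i \le n$, let $\calH_{n,i}$ be a family of pairwise-independent hash functions $\bin^n \to \bin^i$ with short keys and polynomial-time evaluation (affine maps over $\mathbb{F}_2$ suffice), and define the polynomial-time function $f'(x,i,h) = (f(x),i,h,h(x))$. Since one-way functions do not exist, for every polynomial $q$ there is a probabilistic polynomial-time algorithm $M$ inverting $f'$ — returning some preimage — with probability at least $1 - 1/q(n)$ over a uniformly random input and its coins, for all large $n$. The point of putting $h(x)$ into the output of $f'$ is that on an instance $(y,i,h,w)$ whose bucket $\{x' \in f^{-1}(y) : h(x') = w\}$ is a singleton, $M$ is \emph{forced} to return that unique element; drawing $h$ and $w$ uniformly then randomizes which preimage of $y$ comes out, so the hash, rather than $M$, is what supplies uniformity.

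With this in hand, $A(y)$ proceeds in two phases. First it locates the right scale: using $M$ as a subroutine it estimates, for each $i$, the probability that a random bucket is nonempty — which is within constant factors of $\min(1, |f^{-1}(y)|/2^i)$ — and thereby pins down $\ell := \lfloor \log_2 |f^{-1}(y)| \rfloor$ up to an additive constant. It then sets $i^\ast = \ell + \lceil \log_2 T \rceil$ for a sufficiently large polynomial $T$ in $n$ and $p(n)$, and samples $h \sim \calH_{n,i^\ast}$ together with fresh uniform $w \in \bin^{i^\ast}$ until $M(y,i^\ast,h,w)$ returns a valid preimage of $y$; since $2^{i^\ast} = \Theta(T \cdot |f^{-1}(y)|)$ keeps a random bucket nonempty with probability $\Omega(1/T)$, a polynomial number of attempts suffices. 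For the uniformity analysis, pairwise independence gives that for each fixed $x_0 \in f^{-1}(y)$ the bucket indexed by $h(x_0)$ equals $\{x_0\}$ with probability $1 - O(1/T)$ over $h$; together with the facts that $M$ must output $x_0$ on a singleton bucket, that a uniform $w$ hits $h(x_0)$ with probability $2^{-i^\ast}$, and that $|f^{-1}(y)| \cdot 2^{-i^\ast} = \Theta(1/T)$, this gives that $A(y)$ outputs $x_0$ with probability $(1 \pm O(1/T))/|f^{-1}(y)|$. Summing the resulting deviations over $x_0 \in f^{-1}(y)$, over the $O(n)$ scales probed in the first phase, and over the $\poly(n)$ invocations of $M$ (each failing with probability at most $1/q(n)$, after a change of measure absorbing the mild mismatch between a uniform $w$ and the $w$-marginal of a genuine output of $f'$) bounds the joint statistical distance by $O(1/T) + O(n/q(n))$, which is below $1/p(n)$ once $T$ and $q$ are chosen as large enough polynomials.

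The main obstacle is not any individual estimate but the coordination of the pieces: showing that the scale-estimation phase is reliable on almost all of the image mass, that the single oversampling parameter $T$ simultaneously forces buckets to be singletons (needed for near-uniformity) yet keeps them nonempty often enough (needed for efficiency), and that the plain-inversion guarantee for $M$ survives being invoked on the off-distribution instances $(y,i,h,w)$ — all while one union bound over the $O(n)$ scales and $\poly(n)$ queries holds the accumulated error below $1/p(n)$. A secondary point that needs the usual care is the passage from ``one-way functions do not exist'' to near-perfect invertibility of $f'$ at \emph{every} large input length rather than at infinitely many lengths only; this is dispatched by the standard reductions relating strong, weak, and infinitely-often notions of one-wayness, as in~\cite{IL89}.
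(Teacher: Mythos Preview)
The paper does not prove this theorem; it simply quotes it from \cite{IL89} and invokes it as a black box in the proof of Theorem~\ref{th:noowf}. There is nothing to compare against on the paper's side.

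Your sketch is the standard Impagliazzo--Luby argument: augment $f$ with a pairwise-independent hash, $f'(x,i,h)=(f(x),i,h,h(x))$, so that a plain inverter for $f'$ is forced, on singleton buckets, to return the unique preimage; then oversample the hash range by a polynomial factor relative to $|f^{-1}(y)|$ so that buckets are singletons with high probability while still nonempty often enough to be found in polynomially many trials. The two places you flag as needing care---the distributional mismatch between uniform $w$ and the true $w$-marginal of $f'$, and the passage from ``no one-way functions'' to inverting $f'$ at \emph{all} large lengths---are exactly the points where a full write-up has to do work, and both are handled in \cite{IL89} in the way you indicate. At the level of a plan, the argument is correct.
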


\textcolor{black}{
  We prove that every samplable flat distribution $Z$ of entropy $m$ can be corrected by efficient coding schemes of rate $1-m/n-o(1)$.
  We employ a family of linear universal hash functions~\cite{CW79,TH13} as hash functions.
  As described above, a linear hash function $h$ is used for defining a linear code,
  and we construct an efficient syndrome decoder that recovers $z$ from a syndrome $h(z)$
  by assuming that one-way functions do not exit.
  A problem for recovering $z$ from $h(z)$ is that there may be exponentially many preimages of $h(z)$,
  and we need to choose $z$ that is in the support of $Z$.
  To solve the problem, we define the function $g(y, h) = (h, h(z))$,
  where $z = f(y)$ and $f$ is a sampling function of $Z$.
  By assuming that distributionally one-way functions do not exist,
  the distribution $(y,h,g(y,h))$ for random $y$ and $h$ can be simulated by an efficient algorithm $A$
  given $g(y,h)$.  Since $A$ can simulate $(y,h)$ from $g(y,h)$, the decoder can identify $z = f(y)$ by employing $A$. 
  }
\begin{theorem}\label{th:noowf}
If one-way functions do not exist,
then any samplable flat distribution $Z$ over $\bin^n$ of entropy $m$ can be
corrected with rate $1 - m/n - (c\log n)/n$ and error \textcolor{black}{probability} $O(n^{-c})$ for any constant $c > 0$
by polynomial-time coding schemes.
\end{theorem}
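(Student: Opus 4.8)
The plan is to assume that one-way functions do not exist, hence (via the Impagliazzo--Luby theorem quoted above) that no distributionally one-way functions exist either, and to use this to build an efficient decoder for any samplable flat distribution $Z$ with sampler $S$. The starting observation is the duality between linear codes correcting additive errors and linear compression: to decode $y = \Enc(x) + z$ it suffices to recover $z$ from the syndrome $z \cdot H^T$, where $H$ is the parity-check matrix of a linear code. So the real task is to compress $Z$ to length $m + O(\log n)$ using a linear function, and to invert that compression efficiently on the support of $Z$. Since $Z$ is flat with support of size $2^m$, the Leftover Hash Lemma tells us that for a random linear hash $h$ from $\bin^n$ to $\bin^{m + c\log n}$ taken from a pairwise-independent family, with high probability $h$ is injective on $\Supp(Z)$; set $H$ to be (a matrix realizing) such an $h$, so the code has rate $1 - m/n - (c\log n)/n$.

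The encoder is then the standard linear encoder for $H$, and the decoder on input $y$ computes the syndrome $s = (y - \Enc(x)) \cdot H^T$ — wait, more precisely, using syndrome decoding it computes $y\cdot H^T$, subtracts off the known contribution of the message part, and is left with $z \cdot H^T$; it must then find the unique $z \in \Supp(Z)$ with that syndrome. This is where the non-existence of distributionally one-way functions enters, exactly as in Wee's Theorem~6.3. Consider the polynomial-time function $F(r) = (h(S(1^n; r)), \text{(auxiliary bits)})$, or more simply the map $r \mapsto h(S(1^n;r))$ composed appropriately; since this is polynomial-time computable and no distributionally one-way functions exist, there is a polynomial-time algorithm $A$ that, given a sample of $h(S(1^n;r))$, outputs a preimage $r'$ whose induced distribution is statistically $1/n^{c'}$-close (for any desired polynomial, by padding/iterating) to the conditional distribution of $r$ given the output. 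Running $S$ on $r'$ recovers a candidate $z' = S(1^n; r')$; because $h$ is injective on $\Supp(Z)$ with high probability and $z'$ lies in $\Supp(Z)$ with probability close to $1$ (it is output by $S$), $z' = z$ except with probability $O(n^{-c})$. The decoder then outputs $x = (y - z')\cdot G^{-1}$.

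The key steps, in order, are: (i) reduce decoding to linear compression plus efficient inversion via syndrome decoding; (ii) use the Leftover Hash Lemma to pick a linear hash $H$ of output length $m + c\log n$ that is injective on $\Supp(Z)$ with probability $1 - n^{-\omega(1)}$, fixing one good $H$; (iii) define the appropriate polynomial-time function whose inversion yields a $Z$-sample consistent with a given syndrome, invoke the hypothesis that distributionally one-way functions do not exist to obtain an efficient approximate inverter $A$, and boost its success by standard amplification so the statistical error is $O(n^{-c})$; (iv) assemble $\Enc$, $\Dec$ and bound the total error by a union bound over the hash-collision event and the inverter-failure event.

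The main obstacle is step (iii): controlling the error of the approximate inverter. The algorithm $A$ guaranteed by the absence of distributionally one-way functions only produces a preimage whose distribution is close in statistical distance to the correct conditional distribution on randomness strings, not a preimage of a specified structured form, and its guarantee is for a random sample of the function's output, not for an arbitrary adversarially chosen syndrome. One must therefore argue that the syndrome fed to the decoder — which comes from $z \sim Z$ — is distributed exactly as $h(S(1^n;r))$, so that $A$'s average-case guarantee applies with the correct input distribution, and then use that $h$ is injective on $\Supp(Z)$ to upgrade "close conditional distribution on $r$" to "correct value of $z$ with high probability". One also needs to amplify the inverter's inverse-polynomial error down to $O(n^{-c})$ for the target $c$, which is standard (repeat and test membership of $z'$ in $\Supp(Z)$, or shrink via taking many independent samples), but must be done carefully so that the decoder stays polynomial-time. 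Once these points are handled, assembling the coding scheme and the final union bound is routine.
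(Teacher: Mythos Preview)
Your overall strategy matches the paper's: reduce to syndrome decoding, take a linear universal hash as the parity-check matrix, invoke the non-existence of distributionally one-way functions (via Impagliazzo--Luby) to invert the map ``randomness $\mapsto$ hash of the sampled error'' and thereby recover $z$ from its syndrome, then assemble $(\Enc,\Dec)$ and union-bound the errors. The observation in step (iii) that the syndrome fed to the decoder is distributed exactly as $h(S(1^n;r))$, so that the inverter's average-case guarantee applies, is exactly the point.

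There is, however, a genuine error in step (ii). With hash output length $m + c\log n$, a random linear $h$ is \emph{not} injective on $\Supp(Z)$ with probability $1 - n^{-\omega(1)}$; in fact it is essentially never injective once $m = \omega(\log n)$. Pairwise independence only gives that a \emph{fixed} $x \in \Supp(Z)$ has some collision partner with probability at most $2^m \cdot 2^{-(m+c\log n)} = n^{-c}$; the expected size of the collision set $C_h = \{x \in \Supp(Z): \exists x'\neq x,\ h(x)=h(x')\}$ is therefore $\Theta(2^m/n^c)$, which is huge. (This is not the Leftover Hash Lemma, which concerns closeness to uniform rather than injectivity.) The paper fixes this by not insisting on injectivity: it applies Markov to get $|C_h| \le 2^m/n^c$ with probability $\ge 1 - n^{-c}$ over $h$, calls such $h$ ``good'', and simply absorbs the event $z \in C_{h_0}$ into the final $O(n^{-c})$ decoding error. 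Your argument goes through once you make this change, but as written the injectivity step fails.

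A smaller technical difference: the paper does not fix $h$ before invoking the inverter. It defines $g(y,h) = (h, h(f(y)))$ with the hash as part of the input, applies the no-distributionally-OWF hypothesis to $g$, and only afterward fixes both the inverter's random coins and a good $h_0$ in a single averaging step. Your plan of fixing a good $h$ first and then inverting $r \mapsto h(S(1^n;r))$ is workable provided the chosen $h_n$'s form a uniformly polytime-computable family across $n$, but bundling $h$ into the input as the paper does avoids that bookkeeping and also makes the ``boost the inverter's error'' step you mention unnecessary: the desired $n^{-c}$ is obtained directly from the definition, with no amplification or membership-testing loop.
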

\begin{proof}
Let $Z = f(U_m)$ for an efficiently computable function $f$.
Consider a family of linear universal hash functions $\calH = \{h : \bin^n \to \bin^{n+2c\log n}\}$,
where 
the universality means that for any distinct $x, x' \in \bin^n$, $\Pr_{h \in \calH}[ h(x) = h(x')] \leq 2^{-(m+2c\log n)}$,
and the linearity means that for any $x, x' \in \bin^n$ and $a, b \in \bin$, $h(ax+bx') = ah(x) + bh(x')$.
For each $h \in \calH$, we define $C_h = \{ x \in \Supp(Z) : \exists x' \in \Supp(Z) \text{ s.t. } x' \neq x \wedge h(x) = h(x') \}$.
Namely, $C_h$ is the set of inputs with collisions under $h$.
By a union bound, it holds that for any $x \in \Supp(Z)$,
\begin{align*}
  \Pr_{h \in \calH}[ \exists x' \in \Supp(Z) : x' \neq x \wedge h(x') = h(x)] 
  \leq \frac{2^m}{2^{m+2c\log n}} = \frac{1}{n^{2c}}.
\end{align*}
Thus, $E[ |C_h| ] \leq 2^m/n^{2c}$. 
We say $h \in \calH$ is good if $|C_h| \leq 2^m/n^c$. 
By Markov's inequality, we have that $\Pr_{h \in \calH}[ |C_h| > 2^m/n^c] < 1/n^c$.

Consider the function $g : \bin^m \times \calH \to \calH \times \bin^{m+2\log n}$ given by $g(y,h) = (h, h(f(y)))$.
Note that $g$ is polynomial-time computable.
By the assumption that one-way functions do not exist, and thus distributionally one-way functions do not exist,
there is a polynomial-time algorithm $A$ such that the statistical distance between $(y, h, g(y, h))$ and $(A(g(y,h)), g(y,h))$ is
at most $n^{-c}$, 
where $y \sim U_m$ and $h \in \calH$. 
Then, it holds that
\begin{equation*}
\Pr_{A,y,h}[g(A(g(y,h))) = g(y,h)] \geq 1 - \frac{1}{n^c},
\end{equation*}
where the probability is taken over the random coins of $A$, $y \sim U_m$, and $h \in \calH$.
Thus, we have that
\begin{equation*}
\Pr_{A,y,h}[ g(A(g(y,h))) = g(y,h) \wedge \text{$h$ is good}] \geq 1 - \frac{2}{n^c}.
\end{equation*}
By fixing the coins of $A$ and $h \in \calH$,
it holds that there are deterministic algorithm $A'$ and $h_0 \in \calH$ such that $h_0$ is good and
\begin{equation*}
\Pr_{y}[ g(A'(g(y,h_0))) = g(y,h_0) ] \geq 1 - \frac{2}{n^c}.
\end{equation*}
For $y \in \bin^m$ satisfying $g(A'(g(y,h_0))) = g(y, h_0)$, 
we write $A'(g(y,h_0)) = (y', h')$, where $A_1'(g(y,h_0)) = y'$ and $A_2'(g(y,h_0)) = h'$.
Then, it holds that $h' = h_0$ and $h_0(f(y)) = h_0(f(y'))$.
Furthermore, since $h_0$ is good, $\Pr_y[ f(y) \notin C_{h_0}] \geq 1 - 1/n^c$.
Let $H_0 \in \bin^{(m + c\log n) \times n}$ be a matrix such that $x H_0^T = h_0(x)$ for $x \in \bin^n$.
(Such matrices exist since $\calH$ is a set of linear hash functions.)
Consider a linear coding scheme in which $H_0$ is employed as the parity check matrix, and $A_1'$ is employed for recovering errors from syndromes.
That is, $\Enc(x) = x  G$ for a matrix $G \in \bin^{(n-m-c\log n) \times n}$ satisfying $GH_0^T = 0$,
and $\Dec(y) = (y - f(A_1'(h_0, y H_0^T)) ) G^{-1}$,
where $G^{-1} \in \bin^{n \times Rn}$ is a right inverse matrix of $G$.
Then, for any $x \in \bin^m$,
\begin{align*}
&\Pr_{y \sim U_r}[\Dec( \Enc(x) + f(y)) = x] \\
& = \Pr_{y \sim U_r}\left[ \begin{aligned} & \Enc(x)+f(y) \\ &- f(A_1'(h_0, (\Enc(x)+f(y))H_0^T)) = xG \end{aligned} \right]\\
& = \Pr_{y \sim U_r}[f(A_1'(g(y,h_0))) = f(y)],
\end{align*}
where we use the property that $GG^{-1}=I$, $\Enc(x) = xG$, $GH_0^T = 0$, and $x H_0^T = h_0(x)$.
Since the probability that $g(A_0(g(y,h_0))) = g(y,h_0)$ is at least $1 - 2/n^c$,
and for any $y \in \bin^m$ satisfying $g(A_0(g(y,h_0))) = g(y,h_0)$, $\Pr_y[ f(y) \notin C_{h_0}] \geq 1 - 1/n^c$,
we have that
\begin{equation*}
\Pr_{y \sim U_m}[f(A_1'(g(y,h_0))) = f(y)] \geq 1 - \frac{3}{n^c}.
\end{equation*}
Hence the statement follows.
\end{proof}
} 

\section{Positive Results}\label{sec:positive}

In this section, we present positive results for correcting samplable additive errors.

\subsection{Errors from Linear Subspaces}

We show that if the set of error vectors forms a linear subspace,
every error can be corrected by a linear code with optimal rate.
Let $Z' = \{z_1, z_2, \dots, z_m\} \subseteq \F^n$ be a set of linearly independent vectors.
We construct a linear code that corrects additive errors from the linear span of $Z'$.

\begin{proposition}\label{prop:subspace}
  \textcolor{black}{Let $Z$ be the uniform distribution over the linear span of $Z'$,
  which has entropy $m$.}
There is a linear code of rate $1-m/n$ that corrects \textcolor{black}{$Z$}  by syndrome decoding.
\end{proposition}
\begin{proof}
\ignore{
For a permutation $\pi : \{1, \dots, n\} \to \{1, \dots, n\}$ and a vector $v = (v_1, v_2, \dots, v_n) \in \bin^n$,
define $\pi(v) = (v_{\pi(1)}, v_{\pi(2)}, \dots, v_{\pi(n)})$.
It is not difficult to see that there is a matrix $M  \in \bin^{m \times n}$ such that
(1) $M = [I_m\, M']$, where $I_m \in \bin^{m \times m}$ is the identity matrix
and $M' \in \bin^{m \times (n-m)}$,
and (2) for some permutation $\pi$, the linear span of $\pi(m_1), \pi(m_2), \dots, \pi(m_m)$
is equal to that of $Z$, where $m_1, \dots, m_m$ are the rows of $M$.
For $i \in \{1, \dots, n\}$, 
let $h_i \in \bin^n$ be the vector such that the $i$-the element is $1$ and the other elements are $0$.
Note that for $i, j \in \{1, \dots, m\}$, the inner product $m_i \cdot h_j$ is $1$ if $i = j$, and $0$ otherwise.
Let $H \in \bin^{m \times n}$ be the matrix consisting of $\pi(h_1), \pi(h_2), \dots, \pi(h_m)$ as rows.
Then, the linear code having $H$ as a parity-check matrix can correct additive errors from the linear span of $Z$.
In syndrome decoding, we define the function  $\Rec$ such that, for $s = (s_1, \dots, s_m) \in \bin^{m}$, 
$\Rec(s) = \sum_{i=1}^{m} s_i \pi(m_i)$.

Let $z \in \F^n$ be any error from the linear span of $Z'$.
It follows from the property~(2) of the matrix $M$ that $z$ can be represented as $\sum_{i=1}^{n}b_i \pi(m_i)$,
where $b_i \in \bin$.
Then, 
\begin{align*}
z \cdot H^T & = z \cdot \begin{bmatrix} \pi(h_1)\\ \vdots \\ \pi(h_m) \end{bmatrix}^T  \\
& = (z \cdot \pi(h_1), \dots, z \cdot \pi(h_m)).
\end{align*}
By the linearity of the inner product,
\begin{align*}
z \cdot \pi(h_j) & = \left(\sum_{i=1}^{n}b_i \pi(m_i) \right) \cdot \pi(h_j) \\
& = \sum_{i=1}^n b_i (\pi(m_i) \cdot \pi(h_j)) \\
& = b_j,
\end{align*}
where the last equality follows from the fact that $m_i \cdot h_j$ is $1$ if $i = j$, and $0$ otherwise.
Hence, $z \cdot H^T = (b_1, \dots, b_m)$,
and thus $\Rec(z\cdot H^T)$ can recover the error vector $z$.
Therefore, the syndrome decoding corrects any error from the linear span of $Z'$.
Since $H \in \bin^{m \times n}$ is a parity check matrix, the rate of the code is $(n- m)/n$.
}
Consider $n-m$ vectors $w_{m+1}, \dots, w_n \in \F^n$ such that the set $\{z_1, z_2, \dots, z_m, w_{m+1}, \dots, w_n\}$ forms a basis of $\F^n$.
Then, there is a linear transformation $T : \F^n \to \F^m$ such that $T(z_i) = e_i$ and $T(w_i) = 0$,
where $e_i$ is the vector with $1$ in the $i$-th position and $0$ elsewhere.
Let $H$ be the matrix in $\F^{m \times n}$ such that $x H^T = T(x)$,
and consider a code with parity check matrix $H$.
Let $z = \sum_{i=1}^m a_i z_i$ be a vector in the linear span of $Z'$, where $a_i \in \F$.
Since $z \cdot H^T = (\sum_{i=1}^m a_i z_i) \cdot H^T = \sum_{i=1}^m a_i e_i = (a_1, \ldots, a_m)$,
the code can correct the error $z$ by syndrome decoding.
Since $H \in \F^{m \times n}$ is the parity check matrix, the rate of the code is $(n- m)/n$.
\end{proof}

\subsection{A Computational Condition}\label{sec:compcond}

Let $Z = f(U_r)$ be a flat distribution over $\bin^n$ of entropy $m$ associated with a samplable additive channel,
where $f$ is an efficiently computable function, and $r \geq m$.
We give a computational condition under which $Z$ is efficiently correctable.
Roughly speaking, we show that if a variant of the function $f$ is efficiently ``invertible'',
then $Z$ can be efficiently corrected.

Specifically, for function $f : \bin^r \to \bin^n$,
we define function $g : \bin^r \times \calH \to \calH \times \bin^{m+2c \log n}$ such that
\begin{equation*}
  g(y, h) = (h, h(f(y))),
  \end{equation*}
where 
$\calH = \{h : \bin^n \to \bin^{m+2c\log n}\}$ is  a family of \emph{linear universal hash functions}~\cite{CW79,TH13},
and $c$ is a positive constant.
The universality means that for any distinct $x, x' \in \bin^n$,
\begin{equation*}
  \Pr_{h \in \calH}[ h(x) = h(x')] \leq 2^{-(m+2c\log n)},
\end{equation*}
and the linearity means that for any $x, x' \in \bin^n$ and $a, b \in \bin$, $h(ax+bx') = ah(x) + bh(x')$.

As efficient ``invertibility'', we introduce the notion of \emph{distributionally one-way function}~\cite{IL89}.
Intuitively, such a function $g$ guarantees that the distribution $(x, g(x))$ for random $x$ is difficult to be simulated by efficient algorithms given only $g(x)$.
\begin{definition}
A function $g$ is said to be \emph{distributionally one-way} if it is computable in polynomial time
and there exists a constant $c > 0$ such that
for every probabilistic polynomial-time algorithm $A$, the statistical distance between
$(x, g(x))$ and $(A(g(x)), g(x))$ is at least $1/n^c$, where $x \sim U_n$.
\end{definition}

We show that if the function $g$ defined above is not distributionally one-way,
then the error distribution $Z = f(U_r)$ is efficiently correctable.
Before giving the formal proof, we describe the underlying idea.

We employ the technique used in the proof of~\cite[Theorem~6.3]{Wee04} that shows the necessity of one-way functions 
for separating pseudoentropy and compressibility.
In the proof, it is observed that every samplable distribution $Z$ can be
optimally compressed by a hash function $h$ such that a sample $z$ from $Z$ is simply compressed to $h(z)$.
In addition, if distributionally one-way functions do not exist, it is shown that $z$ can be recovered from $h(z)$
by a polynomial-time algorithm.
We observe that 
a family of \emph{linear} hash functions is used for giving a recovering algorithm.
Since a linear compression function is a dual object of a linear code that corrects additive errors~\textcolor{black}{\cite{CSV04}},
we can construct a linear code correcting additive errors of $Z$.
More specifically, we construct an efficient syndrome decoder that recovers $z = f(y)$ from a syndrome $h(z)$
by assuming that $f$ is not distributionally one-way.
A problem for recovering $z$ from $h(z)$ is that there may be exponentially many preimages of $h(z)$,
and we need to choose $z$ that is in the support of $Z$.
To solve the problem, we define the function $g(y, h) = (h, h(f(y)))$,
and construct a code by assuming that $g$ is not distributionally one-way.

\begin{theorem}\label{thm:distow}
  If $g(y,h) = (h,h(f(y)))$ is not distributionally one-way,
  then the flat distribution $Z = f(U_r)$ over $\bin^n$ of entropy $m$ can be
  corrected by a polynomial-time coding scheme of rate $1 - m/n - (2c\log n)/n$ and error \textcolor{black}{probability} $O(n^{-c})$.
\end{theorem}
\begin{proof}
  For each $h \in \calH$, define $C_h = \{ z \in \Supp(Z) : \exists z' \in \Supp(Z) \text{ s.t. } z' \neq z \wedge h(z) = h(z') \}$.
  Namely, $C_h$ is the set of inputs with collisions under $h$.
  By a union bound, it holds that for any $z \in \Supp(Z)$,
  \begin{align*}
    \Pr_{h \in \calH}[ \exists z' \in \Supp(Z) : z' \neq z \wedge h(z') = h(z)] 
    \leq \frac{|\Supp(Z)|}{2^{m+2c\log n}} \leq \frac{1}{n^{2c}}.
  \end{align*}
  Thus, ${\mathbb{E}}_{h \in \calH}\left[ |C_h| \right] \leq {2^m}/{n^{2c}}.$
We say $h \in \calH$ is good if $|C_h| \leq 2^m/n^c$. 
By Markov's inequality, we have that
\begin{equation*}
  \Pr_{h \in \calH}\left[ |C_h| > \frac{2^m}{n^c}\right] < \frac{1}{n^c}.
\end{equation*}

By the assumption that $g$ is not distributionally one-way,
there is a polynomial-time algorithm $A$ such that the statistical distance between
$((y, h), g(y, h))$ and $(A(g(y,h)), g(y,h))$ is at most $n^{-c}$, where $y \sim U_r$ and $h \in \calH$.
Then, it holds that
\begin{equation*}
\Pr_{A,y,h}[g(A(g(y,h))) = g(y,h)] \geq 1 - \frac{1}{n^c},
\end{equation*}
where the probability is taken over the random coins of $A$, $y \sim U_r$, and $h \in \calH$.
Thus, we have that
\begin{equation*}
\Pr_{A,y,h}[ g(A(g(y,h))) = g(y,h) \wedge \text{$h$ is good}] \geq 1 - \frac{2}{n^c}.
\end{equation*}
By fixing the coins of $A$ and $h \in \calH$,
it holds that there are deterministic algorithm $A'$ and $h_0 \in \calH$ such that $h_0$ is good and
\begin{equation*}
\Pr_{y \sim U_r}[ g(A'(g(y,h_0))) = g(y,h_0) ] \geq 1 - \frac{2}{n^c}.
\end{equation*}
For $y \in \bin^r$ satisfying $g(A'(g(y,h_0))) = g(y, h_0)$, 
we write $A'(g(y,h_0)) = (A_1'(g(y,h_0)), A_2'(g(y,h_0))) = (y', h')$.
Then, it holds that $h' = h_0$ and $h_0(f(y)) = h_0(f(y'))$.
Furthermore, since $h_0$ is good, $\Pr_y[ f(y) \notin C_{h_0}] \geq 1 - 1/n^c$.
Since $\calH$ is a set of linear hash functions, there is a matrix
$H_0 \in \bin^{(m + 2c\log n) \times n}$ such that $x H_0^T = h_0(x)$ for $x \in \bin^n$.
Consider a linear coding scheme in which $H_0$ is employed as the parity check matrix, and $A_1'$ is employed for recovering errors from syndromes.
That is, $\Enc(x) = x  G$ for a full-rank matrix $G \in \bin^{(n-m-2c\log n) \times n}$ satisfying $GH_0^T = 0$,
and $\Dec(y) = (y - f(A_1'(h_0, y H_0^T)) ) G^{-1}$,
where $G^{-1} \in \bin^{n \times (n-m-2c\log n)}$ is a right inverse matrix of $G$.
Then, for any $x \in \bin^m$,
\begin{align*}
&\Pr_{y \sim U_r}[\Dec( \Enc(x) + f(y)) = x] \\
  & = \Pr_{y \sim U_r}\left[
    \Enc(x)+f(y) 
    - f(A_1'(h_0, (\Enc(x)+f(y))H_0^T)) = xG 
    \right]\\
& = \Pr_{y \sim U_r}[f(A_1'(g(y,h_0))) = f(y)],
\end{align*}
where we use the property that $GG^{-1}=I$, $\Enc(x) = xG$, $GH_0^T = 0$, and $x H_0^T = h_0(x)$.
Since the probability that $g(A_0(g(y,h_0))) = g(y,h_0)$ is at least $1 - 2/n^c$,
and for any $y \in \bin^r$ satisfying $g(A_0(g(y,h_0))) = g(y,h_0)$, $\Pr_y[ f(y) \notin C_{h_0}] \geq 1 - 1/n^c$,
we have that
\begin{equation*}
\Pr_{y \sim U_r}[f(A_1'(g(y,h_0))) = f(y)] \geq 1 - \frac{3}{n^c}.
\end{equation*}
Thus, $Z$ can be corrected with error probability $O(n^{-c})$.
Since $f$ is efficiently computable, 
both $\Enc$ and $\Dec$ can be computed in time polynomial in $n$.
Hence the statement follows.
\end{proof}

It is known that if one-way functions do not exist,
then neither do distributionally one-way functions~\cite{IL89}.
Thus, Theorem~\ref{thm:distow} implies the following corollary.
\begin{corollary}
  If one-way functions do not exist,
  every samplable flat distribution over $\bin^n$ of entropy $m$ can be corrected
  by an efficient coding scheme of rate $1 - m/n - (2c\log n)/n$ and error probability $O(n^c)$ for any constant $c > 0$.
\end{corollary}
The above corollary indicates the necessity of one-way functions for proving the impossibility results of Sections~\ref{sec:pseudorandom} and~\ref{sec:membership}.


\section{Conclusions}

In this work, we study the correctability of samplable additive errors with unbounded error rate.
We have considered a relatively simple setting in which the error distribution is identical for every coding scheme and codeword.
The results imply that even when a distribution is not pseudorandom by  membership test, 
it is difficult to correct every such samplable distribution by efficient coding schemes.
Nevertheless, a positive result can be obtained if we consider much more structured errors such as errors from linear subspaces.
We present some possible future work of this study.

\setcounter{paragraph}{0}

\paragraph{Further study on the correctability.}



In this work, we have mostly discussed impossibility results.
Thus, showing non-trivial possibility results is interesting.
A possible direction is to consider more structured errors than samplable errors.
One can consider \emph{computationally} structured errors
such as errors computed by log-space machines, constant-depth circuits, or monotone circuits.
Also, one can consider other types of structures, e.g., errors are introduced in a \emph{split-state} manner.
Namely, an error vector is split into several parts, and each part is independently computed.
This model has been well-studied in the context of leakage-resilient cryptography~\cite{DP08,LL12} and non-malleable codes~\cite{DPW10,CG17,ADL14}.
BSC can be seen as an extreme of this type of channels in which each error bit is computed by the same biased-sampler.


\ignore{
\paragraph{Generalizing the results of Cheraghchi~\cite{Che09}.}

In Section~\ref{sec:flat}, we have used the results of Cheraghchi~\cite{Che09} 
who showed that a linear lossless condenser for a flat distribution $Z$ is equivalent to a linear code ensemble 
in which most of them correct additive errors from $Z$.
One possible future work is to generalize this result to more general distributions than flat distributions.
Although the decoding complexity is not considered in the above equivalence,
as presented by Cheraghchi~\cite{Che09} for binary-symmetric channels,
it is possible to construct an efficient coding scheme using
inefficient decoders based on Justesen's concatenated construction~\cite{Jus72}.
It may be interesting to explore other distributions (or characterize distributions) that can be efficiently correctable by Justesen's construction.
}



\paragraph{Characterizing correctability.}

We have investigated the correctability of samplable additive errors
using entropy as a criterion.
There may be another better criterion for characterizing the correctability of these errors,
which might be related to efficient computability, to which samplability is directly related.
Since we have considered general distributions as error distributions,
the information-spectrum approach~\cite{VH94,Han03} may be more plausible.

\section*{Acknowledgment}
  This research was supported in part by
Japan Society for the Promotion of Science (JSPS) and
the Ministry of Education, Culture, Sports, Science and Technology (MEXT) 
Grant-in-Aid for Scientific Research Numbers 
25106509, 15H00851, 16H01705, 17H01695. 
We thank anonymous reviewers for their helpful comments.



%

\bibliographystyle{abbrv} 
\bibliography{mybib}

\end{document}